\documentclass[reprint, amsmath, amssymb, aps, prx, floatfix,]{revtex4-2}

\usepackage[T1]{fontenc}
\usepackage{newtxtext}
\usepackage{newtxmath}
\usepackage{calrsfs}
\DeclareMathAlphabet{\mathcal}{OMS}{cmsy}{m}{n}
\usepackage{amsmath}
\usepackage{amssymb}
\usepackage{rotating}
\usepackage{physics}
\usepackage[table,dvipsnames]{xcolor}
\usepackage{tcolorbox}
\usepackage{mathtools}
\usepackage{array}
\usepackage[utf8x]{inputenc}
\usepackage[normalem]{ulem}
\usepackage{amsbsy}
\usepackage{diagbox}

\usepackage{amsthm}
\newtheorem{proposition}{Proposition}
\newtheorem{theorem}{Theorem}
\newtheorem{corollary}{Corollary}
\theoremstyle{definition}
\newtheorem{definition}{Definition}
\theoremstyle{remark}
\newtheorem{remark}{Remark}

\usepackage{multirow}
\usepackage{makecell}
\usepackage{array}
\usepackage{hhline}
 
\renewcommand{\arraystretch}{1.4}
\newcommand{\yes}{\textcolor{ForestGreen}{\pmb{\checkmark}}}
\newcommand{\no}{\textcolor{Red}{\pmb{$\times$}}}

\usepackage{natbib}
\usepackage{graphicx}
\usepackage{dcolumn}
\usepackage{bm}
\usepackage{hyperref}
\usepackage{tikz}
\usepackage{circuitikz}

\usepackage{geometry}
\begin{document}

\title{General theory of monitored Quantum Reservoir Computing
}

\author{Oriol Morguí-Sancho}
\email{oriolmorgui@ifisc.uib-csic.es}
\author{Gonzalo Manzano}
\author{Gian Luca Giorgi}
\email{gianluca@ifisc.uib-csic.es}
\author{Roberta Zambrini}
\email{roberta@ifisc.uib-csic.es}
 \affiliation{Instituto de Física Interdisciplinar y Sistemas Complejos (IFISC), UIB–CSIC\\
UIB Campus, Palma de Mallorca, E-07122, Spain}

\date{\today}

\begin{abstract}

Quantum reservoir computing  (QRC) provides a powerful framework for processing temporal data using quantum dynamics, but incorporating measurements into the reservoir remains a fundamental challenge and distinctive feature with respect to classical settings. The induced back-action can vary from a source of disturbance to a computational resource, as measurement deeply modifies the dynamics underlying temporal processing. Existing approaches have treated specific monitoring schemes independently, missing the common physical principles governing online quantum reservoirs. Here we develop a general theory of monitored quantum reservoir computing based on indirect quantum measurements, which unifies projective, weak, partial, and dissipative monitoring protocols within a single operational framework. 
Measurement back-action can serve as a controllable resource, providing the effective dissipation and non-unital dynamics required for successful QRC, even when the underlying unmonitored evolution is unitary, a setting otherwise unsuitable for in-memory QRC. We derive general criteria under which monitored dynamics satisfy the echo-state property, fading memory, and input separability, including a necessary and sufficient condition for emergent strict contractivity.
By comparing different monitoring schemes under a common reference dynamics, we show that these protocols are not interchangeable parameterizations to be optimized for peak performance, but rather constitute qualitatively distinct routes to computational capability, each enabled by the interplay between information extraction and measurement-induced disturbance — a trade-off that can be further shaped through time multiplexing. Our results provide a unified theoretical foundation for online monitored quantum reservoir computing and establish quantum measurement engineering as a systematic approach for designing reservoir architectures across different quantum platforms.
\end{abstract}

\maketitle

\section{\label{sec:introduction} Introduction}

Quantum monitoring is the process of extracting information from a quantum system over time through sequential measurements \cite{Wiseman_Milburn_2009, Jacobs_2014}. These measurements disturb the system’s state through back-action, which has to be accounted for when describing the monitored system’s dynamics. Modeling and controlling the monitoring back-action is essential for applications where information about the system needs to be extracted in real-time, for instance, in repeated syndrome measurements for error correction \cite{PhysRevX.11.041058}, in metrology and sensing \cite{PhysRevX.13.031012, quantum-sensing}, and in time series processing in quantum machine learning \cite{mujal_weak}. In some of these applications, the goal is to minimize disturbance, for example by using weak measurements \cite{Murch_2013}. In other cases, measurement back-action can be harnessed as a resource to intentionally drive specific dynamics \cite{steering, entanglement-phase-trans, Bose,Koh_2023}. Engineering proper measurement schemes is thus crucial for online quantum protocols. 

In general, any type of measurement can be implemented as a partial measurement on an enlarged Hilbert space \cite{Wiseman_Milburn_2009, gelfand-neumark}, where the system of interest is unitarily coupled to an auxiliary quantum probe that is projectively measured. This approach enables the realization of generalized measurements beyond standard projective ones, including weak measurements, and provides a way to tailor the back-action. When complemented with a reset operation, the probe can be reinitialized after each measurement and reused, enabling repeated monitoring in a scalable manner. In qubit platforms, these generalized monitoring schemes can already be implemented \cite{Murch_2013, PhysRevA.90.032302, Brun_Diosi_unsharp}, for example, in superconducting devices, where fast and reliable reset operations have recently become available \cite{PhysRevLett.121.060502, hua2023exploitingqubitreusemidcircuit}.

A particularly natural setting where the interplay between back-action and information extraction plays a central role is quantum reservoir computing (QRC), a framework for processing temporal data using quantum dynamics  \cite{FujiiNakajima2017}. In QRC, an input sequence is encoded in the dynamics of a quantum system, called the reservoir, whose evolved state serves as a high-dimensional representation of the input history \cite{opportunities}. Measurements are required to read out input-dependent features, and these are then processed by a trained linear readout layer, while the reservoir itself remains fixed. Through its dynamics, the reservoir retains information about past inputs and temporal correlations, with its memory serving as a key computational resource  \cite{FujiiNakajima2017, NokkalaGaussian}.

Many existing QRC proposals focus on the reservoir dynamics but do not address monitoring and online measurement effects \cite{FujiiNakajima2017,NokkalaGaussian,opportunities}. To circumvent this aspect operationally, one can employ restarting or rewinding protocols \cite{mujal_weak, experimental-qrc}: after each measurement, the reservoir is re-initialized and then evolved again to incorporate the next input. This approach, however, precludes real-time processing and displays an evident overhead with respect to classical implementations where back-action can generally be neglected. There is therefore growing interest in developing online QRC architectures enabling experimental implementations by incorporating quantum measurement in the reservoir dynamics. Different proposals include weak \cite{mujal_weak}, projective \cite{chen-nurdin-yamamoto, yasuda}, or partial \cite{hu_et_al, connerty2024predictingchaoticsystemsquantum, scaling_laws} measurements in qubit-based systems, or by using feedback mechanisms, either coherent \cite{jorge-qrc}, incoherent \cite{feedback-qrc,Paparelle2026}, or hybrid \cite{feedback-midcircuit, feedback-weak}. While several approaches have addressed online monitoring strategies, a unified framework enabling their systematic comparison and task-dependent optimization is still lacking.

In this work, we develop a general framework for online monitored QRC and identify viable reservoir designs. 
The main contributions are as follows. We develop a general theory of monitored quantum reservoir computing based on the framework of indirect quantum measurements, providing a unified description of projective, weak, dissipative, and partial monitoring schemes. This formulation enables a classification of monitoring protocols according to their dynamical properties and establishes general conditions under which monitored reservoirs satisfy the echo-state property, fading memory, and input separability, identifying viable QRC maps.  Within this common framework, we perform a systematic comparison of different monitoring strategies under identical underlying reservoir dynamics, revealing their respective advantages and limitations, and also accounting for the diverse impact of sampling noise. Finally, we identify general design principles for monitored quantum reservoirs by elucidating the trade-off between information extraction and measurement-induced disturbance, and show how this balance can be further optimized through time multiplexing.

We begin by introducing the theoretical background for monitoring quantum systems via quantum measurements, and present several tunable measurement schemes that allow continuous control over the strength of the back-action (Sec. \ref{sec:monitoring-qubits}). We then formulate the general monitored QRC framework and show that measurement back-action can serve as a resource to induce the dynamical properties required for effective QRC, namely echo-state, fading memory, and separability properties (Sec. \ref{sec:monitored_qrc}). Our framework naturally leads to the classification summarized in Table \ref{table:monitored-qrc}. It allows for a systematic comparison of existing monitoring proposals by fixing a common unmonitored dynamics as a reference, addressing both Ising Hamiltonian and random unitaries. Finally, exploiting the inherently online character of the monitored scheme, we demonstrate that performance can be further enhanced through time-multiplexing, revealing a fundamental trade-off between measurement disturbance and the information extracted from the reservoir (Sec. \ref{sec:time-mult}).

\section{Quantum monitoring with qubits}\label{sec:monitoring-qubits}

A defining feature of quantum measurement is the coexistence of information extraction and state disturbance \cite{Wiseman_Milburn_2009, Jacobs_2014}. The former is characterized by assigning to each possible measurement outcome $m$ a positive semi-definite Hermitian operator $E_m$, called \textit{effect}, that determines the probability of obtaining that outcome
\begin{equation}
    \text{Pr}_\rho(m)=\text{Tr}[E_m\rho],    
\end{equation}
given that the measured system is in state $\rho$. The single requirement on the set of effects $\{E_m\}$ is to satisfy the completeness relation $\sum_m E_m = \mathbb{I}$. This set defines a positive operator-valued measurement (POVM), which corresponds to the most general measurements in quantum mechanics. It can be used to obtain information about an observable $X$ if all the effects commute with it, i.e., $[E_m, X]=0, \forall m$.

The measurement-induced state disturbance is referred to as \textit{back-action} and determines the post-measurement state. This back-action, for outcome $m$, is described by a completely positive trace-preserving (CPTP) map $\mathcal{M}^{(m)}$, such that the post-measurement state becomes
\begin{equation}
    \rho^{(m)} = \mathcal{M}^{(m)}\rho.
\end{equation}
We restrict ourselves to efficient measurements, where acquisition of information is guaranteed, and no classical uncertainties are present \cite{Wiseman_Milburn_2009,Jacobs_2014}. Then, the back-action is fully captured by a set of measurement (or Kraus) operators $\{\Omega_m\}$ satisfying $\Omega_m^\dagger \Omega_m =E_m$, and the measurement map is given by
\begin{equation}\label{eq:cond-meas-map}
    \mathcal{M}^{(m)}\rho = \frac{\Omega_m\rho \Omega_m^\dagger}{\text{Pr}_\rho(m)}.
\end{equation}
These operators, as well as the effects, need not be projectors and describe more general quantum measurements. Significant examples, encompassing different back-actions, will be presented in the next subsection.

In a monitored quantum system, measurements are performed sequentially, each  inducing an outcome-dependent disturbance. After $k$ measurements, the state evolution becomes conditioned on the measurement record $\vec{\mathbf{m}}_k = (m_k,\dots,m_1)$,
\begin{equation}\label{eq:q-traj}
\rho^{(m_k,\dots,m_1)}_k=\left(\mathcal{M}^{(m_k)}\circ\Lambda_k\right)\rho_{k-1}^{(m_{k-1},\dots,m_1)}.
\end{equation}
where $\rho_{k-1}^{(m_{k-1},\dots, m_1)}$ is the state after $k-1$ measurements. Here, $\Lambda_k$ is a generic CPTP map describing the system evolution between measurements $k-1$ and $k$. The stochastic evolution defined in Eq. (\ref{eq:q-traj}) represents one step of a single realization of the monitored dynamics (or quantum trajectory), as illustrated in Fig. \ref{fig:neumarks}a. Different runs of the experiment will generally produce different outcome sequences $\vec{\mathbf{m}}_k$ and, consequently, different state trajectories in Eq. \eqref{eq:q-traj}.

\begin{figure}
    \centering
    \includegraphics[width=1.0\linewidth]{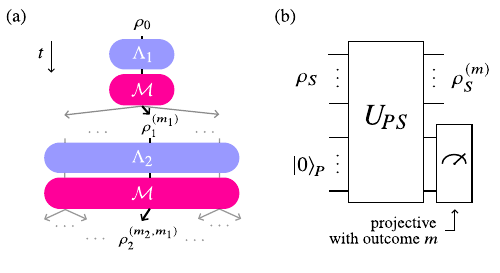}
    \caption{(a) The monitored dynamics of a quantum system can be viewed as the interleaved effect of unmonitored dynamics $\Lambda_k$, which evolve the system in a deterministic way, and measurements $\mathcal{M}$, which stochastically change the state of the system depending on the measurement outcome $m$.  A single realization of the stochastic dynamics, with measurement history $\vec{\mathbf{m}}_k=(m_k,\dots,m_1)$, is known as a quantum trajectory; (b) According to Gelfand-Naimark's theorem, the measurement map $\mathcal{M}$ can always be implemented as an indirect measurement by unitary coupling between the system of interest ($S$) and a quantum probe ($P$) followed by a projective measurement of the latter.}
    \label{fig:neumarks}
\end{figure}

The impact of measurement on the monitored state can vary significantly, ranging from a weak perturbation to a substantial alteration of the state. In particular, projective measurement back-action destroys coherence between the projected eigenstates. Alternatively, information can be extracted indirectly through an auxiliary system coupled to the state of interest \cite{Braginsky_Khalili_Thorne_1992}. In fact, any POVM can be modeled as an indirect extraction of information from an auxiliary system, which allows us to tune the back-action on the system. Such an indirect measurement corresponds to a partial projective measurement on an extended Hilbert space \cite{Wiseman_Milburn_2009} where the system of interest ($S$) is coupled to an auxiliary one, the probe ($P$), through a unitary interaction $U_{PS}$, followed by a projective measurement of the probe (see Fig. \ref{fig:neumarks}b). The post-measurement state of $S$ is then obtained by tracing out $P$, yielding the measurement map 
\begin{equation}\label{eq:gelfand-neumark-selective}
    \mathcal{M}^{(m)}\rho=\frac{1}{\text{Pr}_\rho(m)}\text{Tr}_P\left[(\Pi^{(m)}_P\!\otimes\mathbb{I}_S)U_{PS}\left(\ket{0}\!\!\bra{0}_P\otimes\rho\right)U_{PS}^\dagger\right],
\end{equation}
where $\Pi_P^{(m)}=\ket{\varphi_m}\!\!\bra{\varphi_m}_P$ denotes the probe projector associated to outcome $m$, and $\ket{0}_P$ is the probe's initial (ready-to-measure) state. According to the Gelfand-Naimark dilation theorem \cite{Wiseman_Milburn_2009, gelfand-neumark}, this construction describes any efficient quantum measurement and is equivalent to (\ref{eq:cond-meas-map}), with measurement operators
\begin{equation}\label{eq:meas-ops}
    \Omega_m = \bra{\varphi_m}_{P} U_{PS}\ket{0}_{\!P}.
\end{equation}
By appropriately engineering the coupling unitary $U_{PS}$ and the probe measurement basis $\{\ket{\varphi_m}\}$, one gains full control over both the back-action imparted to $S$ and the information extracted from it.

The monitored dynamics described by Eq. (\ref{eq:q-traj}) requires each measurement to be implemented by coupling the system with a fresh probe, which can be either a single auxiliary system that is reset and reused at each step or a sufficiently large auxiliary system where the entire measurement record can be stored. This structure is closely related to collision models for open quantum dynamics \cite{collision-models, Brun_trajectories}, and can be extended beyond the efficient measurements case \cite{Manzano18}. Indeed, the non-selective measurement map, obtained by averaging over all measurement outcomes,
\begin{align}\label{eq:nonselective-measurement}
    \mathcal{M}\rho&=\sum_m \text{Pr}_\rho(m)\cdot\mathcal{M}^{(m)}\rho \nonumber\\
    &=\sum_m \Omega_m\rho\Omega^\dagger_m,
\end{align}
takes the form of a Kraus representation of a CPTP map describing deterministic but irreversible dynamics. The non-selective monitored evolution is therefore analogous to open quantum system dynamics, with the key distinction that the environment (here, the probe) is measured to extract information about the system. In the following, we show that these maps allow us to identify measurement schemes with controllable back-action, where the dissipation rate translates directly into measurement strength, and provide explicit operational examples.

\begin{figure*}
    \centering
    \includegraphics[width=1.0\linewidth]{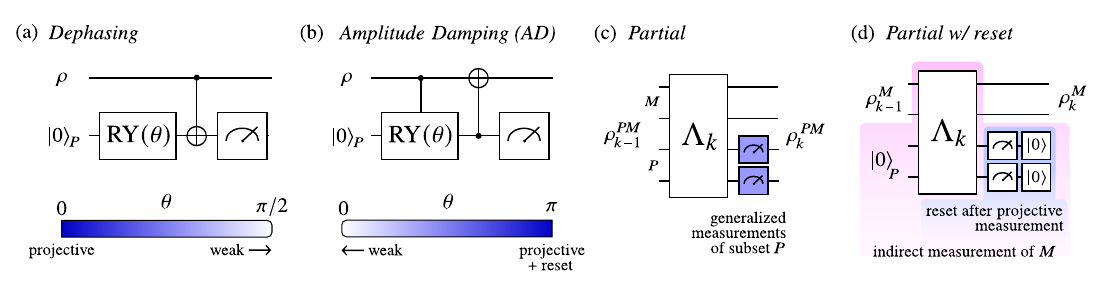}
    \caption{Using the equivalence between open quantum dynamics and nonselective measurement maps, one can engineer indirect measurement schemes whose average back-action reproduces dissipative channels such as (a) dephasing and (b) amplitude damping. In both cases, a system qubit is coupled to a probe qubit through the unitary circuits shown, with a tunable parameter $\theta$ controlling the measurement strength from weak to strong. For many-body systems, the measurement-induced disturbance on the natural dynamics $\Lambda_k$ can also be reduced by measuring only a subset of qubits $P$, which is equivalent to indirectly measuring the complementary subset $M$ (c,d). Sequential measurements (monitoring) without resetting $P$ (c) lead to non-Markovian dynamics, since $P$ retains memory of previous outcomes. Resetting $P$ after each measurement (d) restores the Markovian monitoring protocol discussed in the main text. In these schemes, the effective measurement strength is mainly determined by the size of the measured subset $P$.}
    \label{fig:fin-strength}
\end{figure*}

\subsection{Controllable-strength measurements with quantum circuits}\label{subsec:indirect-meas}

Monitoring entails a fundamental trade-off between information gain and state disturbance \cite{fuchs-jacobs}, which becomes particularly relevant when a finite number of experimental runs is available. Projective measurements extract the maximum amount of information from a quantum system at the cost of strong disturbance, and tuning the measurement strength allows one to trade off information gain for disturbance continuously. Controllable-strength measurements can be implemented as indirect measurements.

In this work, we focus on a concrete and experimentally relevant setting in which both the system and the probe consist of qubits, and the coupling unitary can be represented as a quantum circuit. Each system qubit is coupled to its own probe through an identical interaction unitary $U_{PS}$, and projections on the probe are in the computational basis. In this case, the system measurement map is characterized by the $N$-qubit operators
\begin{equation}\label{eq:n-qubit-meas}
    \Omega_{\mathbf{b}_m}=\Omega_{b_1}\otimes\cdots\otimes\Omega_{b_N},
\end{equation}
where $\mathbf{b}_m=b_1\cdots b_N$ is a bit-string with $b_i$ representing the outcome corresponding to qubit $i$.

\subsubsection{Dephasing measurement map}

A tunable-strength measurement scheme, whose averaged back-action is equivalent to a dephasing channel, can be constructed by setting the coupling unitary as \cite{Brun_Diosi_unsharp}
\begin{equation}\label{eq:unitary-deph}
    U_{PS}^\text{(deph)}=\mathrm{CNOT}_{S,P}\left(\mathrm{RY}(\theta)_P\otimes\mathbb{I}_S\right), \quad \theta\in\left[0,\pi/2\right),\end{equation}
where the first and second indices of the controlled gate denote the control and target qubits, respectively. The corresponding measurement operators are
\begin{align}\label{eq:deph-meas-ops}
    \Omega_{0}(\theta) &= \cos\left({\theta}/{2}\right)\Pi_0 + \sin\left({\theta}/{2}\right)\Pi_1 , \nonumber \\
    \Omega_{1}(\theta) &= \sin\left({\theta}/{2}\right)\Pi_0 + \cos\left({\theta}/{2}\right)\Pi_1,
\end{align}
where $\Pi_0=\ket{0}\!\!\bra{0}$ and $\Pi_1=\ket{1}\!\!\bra{1}$. Consequently, the outcome probabilities are $\text{Pr}_\rho(0)=p\rho_{00} + (1-p)\rho_{11}$ and $\text{Pr}_\rho(1) = (1-p)\rho_{00} + p\rho_{11}$, where $\rho_{00},\rho_{11}$ are the populations and $p=\cos^2(\theta/2)$. At $\theta = 0$, this reduces to a strong projective measurement in the computational basis, with a completely dephasing back-action. For $0<\theta<\pi/2$, the measurement is of finite strength, with back-action decreasing as $\theta$ increases at the cost of extracting less information per measurement. While finite-strength measurements are often referred to as weak, we will reserve this term for measurements with $\theta$ close to $\pi/2$, where both the state disturbance and the information extracted per measurement are small (Fig. \ref{fig:fin-strength}a).

The dephasing measurement scheme defined above constitutes a measurement of $\sigma^z$, since all the effects $\{E_0 = p\Pi_0 + (1-p)\Pi_1, E_1 = (1-p)\Pi_0+p\Pi_1\}$ commute with this observable. Moreover, it describes a quantum non-demolition (QND) measurement of the same observable, since the expectation values of $\sigma^z$ are unaffected by the measurement \cite{qnd_braginsky}. This expectation value can be estimated from the outcome statistics as \cite{contextual-values, contextual-values-2} 
\begin{equation}\label{eq:deph-meas-sigma}
    \langle\sigma^z\rangle=\frac{1}{\cos\theta}\left(\text{Pr}_\rho(0)
    - \text{Pr}_\rho(1)\right).
\end{equation}
This ideal expectation value becomes actually noisy when considering finite ensemble size, as in any realistic implementation. Concretely, for finite-strength measurements, the statistical uncertainty in $\langle\sigma^z\rangle$ is larger than for projective measurements of the same observable, requiring more experimental runs to achieve the same precision (see Appendix \ref{app:finite}). In the limit of infinitely many experimental shots, however, the precision of finite-strength measurements coincides with that of projective ones.

When $N$ qubits are measured in the same way, higher-order correlations ($\sigma^z\otimes\sigma^z$,...) can also be extracted from the outcome statistics. The statistics of $\sigma^x$ and $\sigma^y$ can likewise be accessed by rotating the system state before measurement and inverting the rotation afterwards \cite{mujal_weak}.

\subsubsection{Amplitude Damping measurement map}\label{subsec:AD}

An unraveling for the Amplitude Damping (AD) channel can be constructed by coupling the system qubit to an auxiliary qubit through \cite{Nielsen_Chuang_2010}
\begin{equation}
    U_{PS}^\text{(AD)}=\text{CNOT}_{P,S}\text{CRY}(\theta)_{S,P} \,,\quad\theta\in (0,\pi],
\end{equation}
where $\theta$ controls the damping strength. The corresponding map on the system, after projective measurement on the probe qubit (Fig. \ref{fig:fin-strength}b), yields system measurement operators
\begin{align}\label{eq:ad-meas-ops}
    \Omega_{0}(\theta) &= \Pi_0 + \cos\left({\theta}/{2}\right)\Pi_1, \nonumber \\
    \Omega_{1}(\theta) &= \sin\left({\theta}/{2}\right)\ket{0}\!\!\bra{1},
\end{align}
and outcome probabilities $\text{Pr}_\rho(0) = \rho_{00} + p\rho_{11}$ and $\text{Pr}_\rho(1) = (1-p)\rho_{11}$, with $p=\cos^2(\theta/2)$. Now, at $\theta=\pi$, the post-measurement state is always set to $\ket{0}_S$. For $0<\theta<\pi$, one obtains a finite-strength measurement, with the weak regime corresponding to $\theta$ close to 0. The nature of this weak measurement differs fundamentally from that of the dephasing scheme: whereas in the latter all outcomes produce little back-action, in the AD scheme, outcome $m=0$ is obtained with high probability and imparts a small disturbance, while outcome $m=1$ though rare, produces strong back-action and is not a weak perturbation of the state \cite{Brun_weak_measurements_universal}, leading to a detectable quantum jump \cite{Minev19,Chen21}.

Like the dephasing scheme, the AD measurement is compatible with $\sigma^z$, but it is not QND, since the measurement alters the populations of the system and therefore disturbs the observable itself. The expectation value of $\sigma^z$ can also be estimated from the outcome statistics as
\begin{equation}\label{eq:error-sigmaz-ad}
    \langle\sigma^z\rangle=\frac{1}{\sin^2(\theta/2)}\left[\text{Pr}_\rho(0) - \text{Pr}_\rho(1)\right]-\cot^2(\theta/2).
\end{equation}
As in the dephasing case, the statistical uncertainty in $\sigma^z$ is larger in the finite-strength regime than for projective measurements (see Appendix \ref{app:finite}), resulting in less information extracted per shot. Higher-order correlations such as $\sigma^z\otimes\sigma^z$ can likewise be extracted from the $N$-qubit AD measurement [Eq. (\ref{eq:n-qubit-meas})].

\subsubsection{Partial measurement map and reset}\label{subsec:partial}

Instead of the controllable-strength monitoring via tunable coupling to auxiliary units, as in the two previous examples, one can assign to some of the units of a large system (e.g., a part of a quantum circuit or a many-body system) the role of the probe. The system is partitioned into two subsets: the qubits that are measured ($P$) and those that are not ($M$). Any generalized measurement, which can be implemented indirectly, can be performed on the qubits of $P$, and the $N$-qubit measurement operators take the form
\begin{equation}
    \Omega^{PM}_{\mathbf{b}_m} = \Omega_{b_1}\otimes\cdots\otimes\Omega_{b_{N_P}} \otimes \mathbb{I}^{\otimes (N-N_P)},
\end{equation}
where $N_P$ is the number of probed qubits $P$ (Fig. \ref{fig:fin-strength}c). The back-action can thus be controlled by adjusting either the size of $P$ or the strength of the measurement performed on it. A particularly relevant case arises when the measurements on the probed qubits are projective. The partial measurement then disentangles the two subsystems, so that the post-measurement state factorizes as $\rho^P\otimes\rho^M$. If, in addition, the measurements are in the computational basis, the outcome statistics give access to $\langle\sigma^z\rangle$, $\langle\sigma^z\otimes\sigma^z\rangle$, and higher-order correlations for qubits in $P$.

The partial measurement scheme defined above differs from the sequential monitoring framework of the previous Section in two main aspects. First, the disturbance of the partial monitoring is governed by the natural $P$-$M$ interactions, instead of engineered couplings with extra probes, and the (not necessarily weak) back-action on the system becomes sensitive to the quantum circuit or many-body system interactions. Second, the reduced dynamics cannot be represented by a divisible map  (\ref{eq:q-traj}). More specifically, the fresh-probe assumption is no longer satisfied. The subsystem $P$ now plays the role of the probe for $M$, but its state generally depends on the outcome of previous measurements. Consequently, the dynamics of $M$ cannot be described by a sequence of CPTP maps like in Eq. (\ref{eq:q-traj}).

Interestingly, one can recover a collisional model dynamics, as in the dephasing and AD measurement maps, if the qubits in $P$ are \textit{reset} after measurement (Fig. \ref{fig:fin-strength}d). The reset operation erases all information stored in the probes from previous interactions, effectively restoring them as fresh probes, while $M$ retains memory of the past. This not only allows casting the partial measurement map as a collisional model, but also makes projective measurements the appropriate choice. Indeed, as the main back-action is given by the reset, in this case, the measurement of $P$ can be projective without concern for uncontrolled disturbance. In addition, this scheme can represent non-local measurements of the subset $M$, since a one-to-one coupling between qubits in $M$ and $P$ is not required. Finally, we note that given a QRC map $\Lambda_k$ (that does not need to be unitary), the monitoring with partial measurement with reset describes an inefficient measurement in general,  as the $P$-$M$ joint evolution $\Lambda_k$ (or coupling) is not assumed to be unitary.\\

The monitoring schemes described in this section are suitable for online quantum information-processing protocols, allowing in principle to reduce the system disturbance  while still extracting useful information. In the following, we frame the practical implementation of online monitored QRC based on indirect and partial measurements on quantum hardware, addressing different monitoring designs, establishing their relation, and identifying advantages \cite{chen-nurdin-yamamoto, mujal_weak, yasuda}.

\section{Qubit--Based Monitored Quantum Reservoir Computing}\label{sec:monitored_qrc}

\subsection{Fundamentals of Quantum Reservoir Computing}

Quantum reservoir computing \cite{FujiiNakajima2017} is a supervised machine learning paradigm designed for temporal data processing, whose objective is to learn a map between an input sequence (here classical data) $\mathbf{s} = \{s_k\}$ and a target sequence $\mathbf{y} = \{y_k\}$ (Fig. \ref{fig:qrc-paradigm}a). The input is sequentially encoded into a quantum dynamical system, called the reservoir, through a quantum channel $\mathcal{T}(s_k)$, so that the reservoir state after $k$ input injections is
\begin{equation}\label{eq:qrc-map}
    \rho_k = \mathcal{T}(s_k)\circ\cdots\circ\mathcal{T}(s_1)\rho_0.
\end{equation}
Due to this sequential input-dependent evolution, the state of the reservoir at any given time depends on the history of past inputs. We stress that here we focus on quantum \textit{in-memory} processing, where coherent memory emerges from the dynamics of the full quantum system. This is different from designs based on incoherent feedback, where  the system is fully (projectively) measured and reset at each time step, and  memory arises through incoherent feedback of measurements. These settings, limited to incoherent memory, have been addressed in different platforms \cite{feedback-qrc,Paparelle2026,Gyurik}.

In quantum in-memory processing, the channel composition \eqref{eq:qrc-map} governs the state evolution: for each step $k$, a set of measurements of the reservoir state is performed ($\{\Omega_i\}$), and the outcome statistics are used to construct a readout feature vector $\mathbf{x}_k=\{x_{k,i}\}$, where $x_{k,i}=\text{Pr}_{\rho_k}(i)$. Equivalently, when the measurements are compatible with a set of observables $\{O_i\}$, their expectation values can be used as features in place of the outcome probabilities. A constant bias term $x_{k,0}=1$ is typically appended to each feature vector.

The feature vectors are then fed to a readout layer, consisting of a linear function, to produce an output $\hat{{y}}_k$ as
\begin{equation}\label{eq:output}
    \hat{{y}}_k = \mathbf{W}\cdot\mathbf{x}_k,
\end{equation}
where $\mathbf{W}$ is the so-called \textit{weight vector}. This vector is the only trainable parameter in QRC. By contrast, the reservoir's internal parameters are kept fixed throughout the protocol. The training then reduces to optimizing the classical output layer via linear regression, making QRC significantly more efficient to train than other supervised machine learning paradigms.

To assess QRC performance, several benchmark tasks have been proposed \cite{benchmarks}. In this work, we consider the linear short-term memory task (STM) and the NARMA$n$ task (see Appendix \ref{app:benchmark-reservoir} for details). Both tasks are standard (for the sake of comparison across different settings) and probe the capacity of the reservoir to recall past inputs linearly and nonlinearly, respectively. Performance is quantified via the squared Pearson correlation coefficient between the target sequence $\mathbf{y}$ and the predicted sequence $\hat{\mathbf{y}}$ (Appendix \ref{app:benchmark-reservoir}).

\begin{figure}
    \centering
    \includegraphics[width=1.0\linewidth]{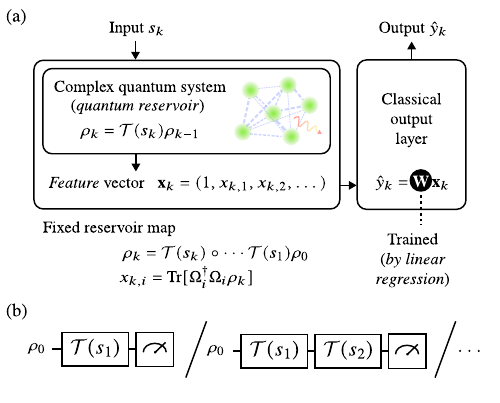}
    \caption{(a) QRC paradigm. At each step, an input $s_k$ is encoded in the state of a quantum system (the reservoir) through a completely positive trace-preserving map $\mathcal{T}$. The reservoir state is measured to extract a vector of features, $\mathbf{x}_k$, depending on both the current and past inputs. These features are then fed to a classical output layer in order to extract an output $\hat{y}_k$. The output weight vector $\mathbf{W}$ is the only trainable component, optimized via linear regression, while the reservoir parameters and dynamics remain fixed throughout the computation. (b) Restarting protocol. Since measurement back-action perturbs the reservoir dynamics, one way to circumvent this effect is to re-initialize the reservoir after each measurement. Starting from a fixed initial state, the reservoir evolves up to a given time step, where a measurement is performed. The reservoir is then reset to its initial state, and the evolution is repeated from the beginning to generate the output at the subsequent time step. This procedure is iterated for every step in the input sequence.}
    \label{fig:qrc-paradigm}
\end{figure}

Not every quantum dynamical system is suitable for QRC. The underlying dynamical map $\mathcal{T}(s_k)$ must satisfy three key properties: the echo-state property (ESP), the fading memory property (FMP), and input separability. These properties define universal (classical) reservoir computing and have also been studied in quantum settings \cite{NokkalaGaussian, chen-nurdin-yamamoto, universality-feedback, sannia}. The ESP ensures that, after a sufficiently long transient, the reservoir state becomes independent of its initial state. The FMP \cite{fading-memory} requires that the influence of past inputs diminishes over time, i.e., the reservoir gradually forgets inputs in the distant past. 

In finite-dimensional quantum systems, both properties are guaranteed if the dynamical map $\mathcal{T}(s_k)$ is strictly contractive for all inputs $s_k$ \cite{sannia, qrc_finite_dimensions}, i.e., if, for every input and every pair of states $\rho, \sigma$,
\begin{equation}\label{eq:strict-contractivity}
    \norm{\mathcal{T}(s_k)\rho-\mathcal{T}(s_k)\sigma}_1\leq \kappa\norm{\rho-\sigma}_1 \text{ with } 0\leq\kappa<1,
\end{equation}
where $\norm{\cdot}_1$ is the trace norm. Equivalently, one can define the \textit{contraction coefficient} as
\begin{equation}\label{eq:cont-coef}
    \kappa^\text{max}:=\sup_{\rho\neq\sigma}\frac{\norm{\mathcal{T}(s_k)\rho-\mathcal{T}(s_k)\sigma}_1}{\norm{\rho-\sigma}_1}.
\end{equation}
For any CPTP map, $\kappa^\text{max}\in[0,1]$, with distances between states either preserved or contracted. A strictly contractive map is one for which $\kappa^\text{max}\in[0,1)$, so that distances are uniformly contracted, a condition that guarantees the existence of a unique fixed point \cite{raginsky}. Then, regardless of the initial state $\rho_0$, the reservoir state (\ref{eq:qrc-map}) converges to this fixed point over a characteristic timescale $\tau_M$, which sets the fading memory of the QRC map. The strict contractivity condition excludes unitary dynamics as viable QRC maps since they preserve distances between states ($\kappa=\kappa^\text{max}=1$ for all pairs $\rho,\sigma$), requiring dissipation to be present to satisfy both the ESP and FMP \cite{sannia, qrc_finite_dimensions}.

In qubit-based QRC architectures, dissipation is typically introduced via the deterministic reset of part of the reservoir (erase-and-write maps \cite{FujiiNakajima2017}), natural or engineered noise (\textit{noisy channels} \cite{suzuki_natural, kubota_noise, domingo_noise_qrc, palacios-coherences}), continuous dissipative reservoir modeled by master equations \cite{Ghosh2019,sannia}, or as stochastic reset of part or all of the system (\textit{probabilistic reset} maps \cite{chen-nurdin-yamamoto, Fry2023}). As we will show in the following subsections, measurements on the reservoir can themselves introduce the dissipation required for suitable QRC dynamics, enabling monitored unitary evolution to serve as a viable QRC map.

In addition to these dynamical properties, a suitable reservoir must also ensure input separability, i.e., that different input series remain distinguishable in the long-time limit by producing distinguishable reservoir states. Together with the ESP, this implies that the unique fixed point of the dynamics (\ref{eq:qrc-map}) must depend on the input history within the fading memory of the reservoir. The combination of these properties rules out unital quantum maps (those satisfying $\mathcal{T}[\mathbb{I}/d]=\mathbb{I}/d$, where $d$ is the dimension of the Hilbert space) as suitable reservoirs in the long-time limit \cite{hu_et_al, rodrigo_input}, since the maximally mixed state becomes the unique fixed point, which is input-independent. However, non-unitality is not a sufficient condition for an input-dependent fixed point: the fully amplitude-damping channel provides a counterexample, as its unique fixed point is the (input-independent) state $\ket{0}\!\!\bra{0}$. Taken together, the ESP, FMP, and input separability thus require the QRC dynamics to be strictly contractive (sufficient condition for the ESP and FMP) and non-unital (necessary, but not sufficient, condition for input separability).

\subsection{Online monitored QRC map}\label{sub:online-monitored-qrc}

Many seminal works on QRC did not incorporate measurement back-action into the reservoir map $\mathcal{T}$ \cite{opportunities}. These proposals can be realized experimentally via  restarting protocols where the dynamics is   re-initialized after each measurement, re-injecting every time the extra input in the sequence to be processed \cite{mujal_weak} (see Fig. \ref{fig:qrc-paradigm}b). The efficiency can be improved by repeating only a shorter part of the dynamics (lasting the fading memory time of the QRC), in a rewinding protocol \cite{mujal_weak, experimental-qrc, memory-restriction-qrc, rewinding-exp}. 
These approaches, however, preclude a real-time implementation of the QRC architecture. Looking for online monitoring, the use of weak measurements  was first proposed in Ref. \cite{mujal_weak}. It was also proposed how to monitor QRC in photonic implementations via beam splitters \cite{jorge-qrc, jorge2}. Recently, several works have addressed real-time online processing in qubit-based reservoirs by incorporating back-action into the QRC map, with different measurement strategies -- indirect projective measurements \cite{chen-nurdin-yamamoto, yasuda}, partial measurements with reset \cite{hu_et_al, connerty2024predictingchaoticsystemsquantum, scaling_laws}, partial projective measurements \cite{ricci_controlled_damping, scaling_laws}, and (weak) dephasing measurements \cite{mujal_weak, memorynonlinearitytradeoffquantumreservoir, franceschetto} --, but a unified framework for systematic comparison has remained lacking. 

\begin{table*}[t]
\centering
\footnotesize
\begin{tabular}{|c|c|c|c||c|c|c|c|c|}
\cline{3-9}
\multicolumn{1}{c}{}
  & \multicolumn{1}{c|}{}
  & \multicolumn{1}{c|}{}
  & \multicolumn{1}{c||}{}
  & \multicolumn{5}{c|}{Dynamical map ($\Lambda$)} \\
\hhline{~~|~|~|-|-|-|-|-|}
    \multicolumn{1}{c}{}
  & \multicolumn{1}{c|}{}
  & \multicolumn{1}{c|}{\makecell{Strict\\contractivity}}
  & \multicolumn{1}{c||}{\makecell{Unitality}}
  & \multirow{2}{*}{Unitary}
  & \multirow{2}{*}{\makecell{Erase-\\and-write}}
  & \multirow{2}{*}{\makecell{Probabilistic\\reset}}
  & \multirow{2}{*}{\makecell{Unital\\noise}}
  & \multirow{2}{*}{\makecell{Non-unital\\noise}} \\
\hhline{~~|~|~|~|~|~|~|~|}
\multicolumn{1}{c}{}
  & \multicolumn{1}{c|}{}
  & \multicolumn{1}{c|}{}
  & \multicolumn{1}{c||}{}
  &
  &
  &
  &
  & \\
\hline
\hline
\multirow{4}{*}[-2ex]{\begin{turn}{90}Monitored ($\mathcal{M}$)\end{turn}
}
  & Dephasing
  & No & Yes
  & \no
  & \makecell{\yes\\ \cite{mujal_weak, memorynonlinearitytradeoffquantumreservoir, franceschetto}}
  & \makecell{\yes\\ \cite{chen-nurdin-yamamoto}}
  & \no
  & \makecell{\yes\\ \cite{yasuda, monzani2025nonunitalnoisesuperconductingquantum, ricci_controlled_damping}} \\
\cline{2-9}
  & AD
  & Yes & No
  &  \makecell{\yes\\  \cite{connerty2026controllablequantummemorycapacity}}
  & \yes
  & \yes
  & \yes
  & \makecell{\yes\\ \cite{sannia,sreetama}} \\
\cline{2-9}
  & \makecell{Partial\\projective}
  & No & Yes
  & \makecell{\no\\ \cite{hu_et_al, scaling_laws}}
  & \yes
  & \yes
  & \no
  & \makecell{\yes\\ \cite{monzani2025nonunitalnoisesuperconductingquantum}} \\
\cline{2-9}
  & \makecell{Partial with\\reset}
  & No & No
  & \makecell{\yes\\ \cite{hu_et_al, connerty2024predictingchaoticsystemsquantum, scaling_laws}}
  & \makecell{\yes\\ \cite{hu_et_al}}
  & \yes
  & \yes
  & \yes \\
\hline
\hline
\multicolumn{2}{|c|}{Unmonitored}
  & -- & --
  & \no
  & \makecell{\yes\\ \cite{FujiiNakajima2017}}
  & \makecell{\yes\\ \cite{chen-nurdin-yamamoto, Fry2023}}
  & \no
  & \makecell{\yes \\ \cite{suzuki_natural, kubota_noise, domingo_noise_qrc}}\\
\hline
\end{tabular}
\caption{Viability of the online QRC maps with different measurement schemes $\mathcal{M}$ and dynamics $\Lambda$. In the table, $\yes$ and \textcolor{Red}{\pmb{$\times$}} indicate whether the corresponding map is or is not a viable QRC scheme, respectively, while "Yes" and "No" denote whether a given measurement scheme satisfies the properties of strict contractivity and unitality. In this paper, we analyze in detail the monitored unitary and monitored erase-and-write maps with all the measurement schemes shown in the table.}
\label{table:monitored-qrc}
\end{table*}

Here, we formulate a general online monitored QRC framework in qubit-based architectures, building on the indirect measurement formalism introduced in Section \ref{sec:monitoring-qubits}. This theory naturally leads to the unified classification summarized in Table \ref{table:monitored-qrc}, which organizes monitored reservoir architectures according to the dynamical properties of both the monitoring channel and the underlying reservoir dynamics, thereby identifying the combinations that constitute viable quantum reservoirs. The selective evolution associated with a measurement record $\vec{\mathbf{m}}_k$ follows directly from Eq. \eqref{eq:q-traj} by identifying each dynamical step with an input-dependent monitored map $ \mathcal{T}^{(m)}(s) = ( \mathcal{M}^{(m)}\circ\Lambda)(s)$. The input sequence may be encoded through the reservoir evolution $\Lambda(s)$ \cite{FujiiNakajima2017,sannia,sannia-skin}, the measurement map \cite{jorge2,hu_et_al}, or both.

As in standard QRC, the readout is constructed from repeated realizations of the monitored dynamics. Averaging over the measurement outcomes yields the ensemble evolution of Eq. \eqref{eq:qrc-map}, where the QRC map is now given by $\mathcal{T}(s) = \left(\mathcal{M}\circ\Lambda\right)(s)$, with $\mathcal{M}$ the non-selective measurement map [Eq. \eqref{eq:nonselective-measurement}]. The feature vector is obtained either from the measurement outcome probabilities or, equivalently, from expectation values of compatible observables. In the former case,
\begin{equation}\label{eq:output-features}
    x_{k,i}
    =
    \mathrm{Tr}
    \!\left[
    \Omega_i^\dagger\Omega_i
    \Lambda(s_k)\rho_{k-1}
    \right].
\end{equation}

For the monitored dynamics to constitute a valid quantum reservoir, the resulting map $\mathcal{T}$ must satisfy ESP, FMP, and input separability discussed in the previous section. The ESP and FMP are guaranteed whenever $\mathcal{T}$ is strictly contractive. In particular, if either $\Lambda$ or $\mathcal{M}$ is strictly contractive, then
\begin{equation}\label{eq:strict-composition}
    \norm{(\mathcal{M}\circ\Lambda)(s_k)\rho-
    (\mathcal{M}\circ\Lambda)(s_k)\sigma}_1
    \le
    \kappa_{\mathcal{M}}^{\rm max}
    \kappa_{\Lambda}^{\rm max}
    \norm{\rho-\sigma}_1,
\end{equation}
for every input $s_k$ and every pair of states $\rho,\sigma$, with $\kappa_{\mathcal{M}}^{\rm max}\kappa_{\Lambda}^{\rm max}<1$, and the total map becomes strictly contractive. Interestingly, strict contractivity may emerge even when neither constituent map is individually strictly contractive (Theorem \ref{thm:disjoint} in Appendix \ref{app:strict} provides a necessary and sufficient criterion), as exemplified by the composition of dephasing channels with different dephasing axes or the repeated composition of random unitary dynamics with dephasing. This example will play an important role in the following.

Similarly, input separability fails for unital dynamics $\mathcal{T}$ in the long term, ruling them out as suitable QRC maps. For monitored reservoirs, if both $\Lambda$ and $\mathcal{M}$ are unital, so is the composed map. As with strict contractivity, non-unitality of $\mathcal{T}$ is not guaranteed even if both constituent maps are non-unital \footnote{An example of this is the concatenation of an amplitude damping channel with full damping, i.e. $\Lambda_1\rho = \, |0\rangle\!\langle 0|$, and a channel of the form $\Lambda_2\rho = (1-p)\rho + p \, \sigma_p$ with $(d-1)/d \leq p < 1$, where $\sigma_p = \mathbb{I}/(p d) + \left(1 - 1/p\right)|0\rangle\!\langle 0|$ is a valid quantum state in this parameter range and $d$ is the dimension of the Hilbert space. For any allowed value of $p$, the composition $\Lambda_2 \circ \Lambda_1$ is unital, $\Lambda_2 \circ \Lambda_1 (\mathbb{I}/d) = \mathbb{I}/d$, while neither sub-map is: $\Lambda_1(\mathbb{I}/d) = \, |0\rangle\!\langle 0|$ and $\Lambda_2(\mathbb{I}/d) \neq \mathbb{I}/d$ for $p < 1$.}.

A necessary, but not sufficient, condition for a viable monitored reservoir is therefore the non-unitality of at least one of the two maps. In particular, unital unmonitored dynamics $\Lambda$ must be complemented with non-unital measurement schemes. These observations open the possibility of engineering measurement schemes that endow otherwise unsuitable unmonitored dynamics, such as unitary maps \cite{luca}, with the ESP and input separability, providing a new route for the design of viable QRC reservoirs.

Table \ref{table:monitored-qrc} provides a unified classification of monitored QRC architectures, identifying which combinations of monitoring schemes and reservoir dynamics constitute viable quantum reservoirs. Representative realizations from the literature are included to position existing approaches within this general theoretical framework.

For the sake of clarity, we distinguish the intrinsic properties of the monitoring channels (strict contractivity and unitality) from the viability of the resulting QRC map, considering both standalone (unmonitored) and monitored reservoir dynamics $\Lambda$. The monitoring schemes included in Table \ref{table:monitored-qrc} encompass the representative classes introduced  above, namely dephasing, amplitude damping, partial projective measurements, and partial measurements with reset, which span both presence and absence of unitality and contractivity of monitoring protocols, and illustrate the different mechanisms through which monitoring modifies the reservoir dynamics.

As a viable QRC map must be both strictly contractive and non-unital, this excludes unitary (unmonitored) dynamics. For any QRC dynamics $\Lambda$ in the absence of monitoring (last row of Table \ref{table:monitored-qrc}), viable designs require supplementing unitary evolution with information erasure \cite{FujiiNakajima2017, chen-nurdin-yamamoto, Fry2023} or non-unital noise channels \cite{sannia, monzani2025nonunitalnoisesuperconductingquantum, luca}, that induce the necessary contractivity. In the following section, we show how monitoring can itself serve as such a resource, rendering viable two dynamical evolutions that are otherwise unsuitable for QRC.

\subsection{Monitoring as a resource for QRC}\label{subsec:monitoring-resource}

Both unitary and unital-noise reservoir dynamics can become viable due to monitoring, depending on the measurement choice. Measurement back-action can provide the contractivity and non-unitality required for reservoir computing. In particular, both amplitude damping and partial projective measurement with reset lead to a viable QRC operation (\textit{Unitary} and \textit{Unital noise} columns in Table \ref{table:monitored-qrc}).

Representative monitored dynamics are shown in Fig. \ref{fig:dynamics_ESP_input} for the unitary evolution $\Lambda(s_k)[\cdot]=U(s_k)\cdot U^\dagger(s_k)$ with input encoding 
\begin{equation}\label{eq:unitary}
    U(s_k) = U_\text{ev}\Big(U_\text{enc}(s_k)^{\otimes N_E} \otimes \mathbb{I}^{\otimes(N - N_E)}\Big),
\end{equation}
where $U_\text{enc}(s_k) = \text{RY}(\phi(s_k))$ can span a subset $E$ of qubits via $\phi(s_k)=2\arcsin\sqrt{s_k}$, and $U_\text{ev}$ is a fixed unitary acting over all reservoir qubits. Throughout this work, the input is encoded into a single qubit ($N_E = 1$) and the unitary $U_\text{ev}$ is chosen as either a Haar-random unitary or a transverse-field Ising Hamiltonian evolution (see Appendix \ref{app:benchmark-reservoir} for more details). The Haar-random unitaries provide a generic benchmark for highly scrambling dynamics \cite{hayden2007black}, while the Ising evolution offers a physically motivated model with tunable correlations \cite{dynamical-phase-transitions}. 

Fig. \ref{fig:dynamics_ESP_input} illustrates the ESP (left panels) and separability (right panels) for different initial conditions and input sequences, respectively, in the unmonitored case and under different measurement schemes. For this purpose, here we consider a $N=5$ qubit system (reservoir), with $U_\text{ev}$ evolution for the Ising Hamiltonian  defined in Appendix \ref{app:benchmark-reservoir}. Additional qubits are used to perform the indirect measurements.

\begin{figure*}
    \centering
    \includegraphics[width=1.0\linewidth]{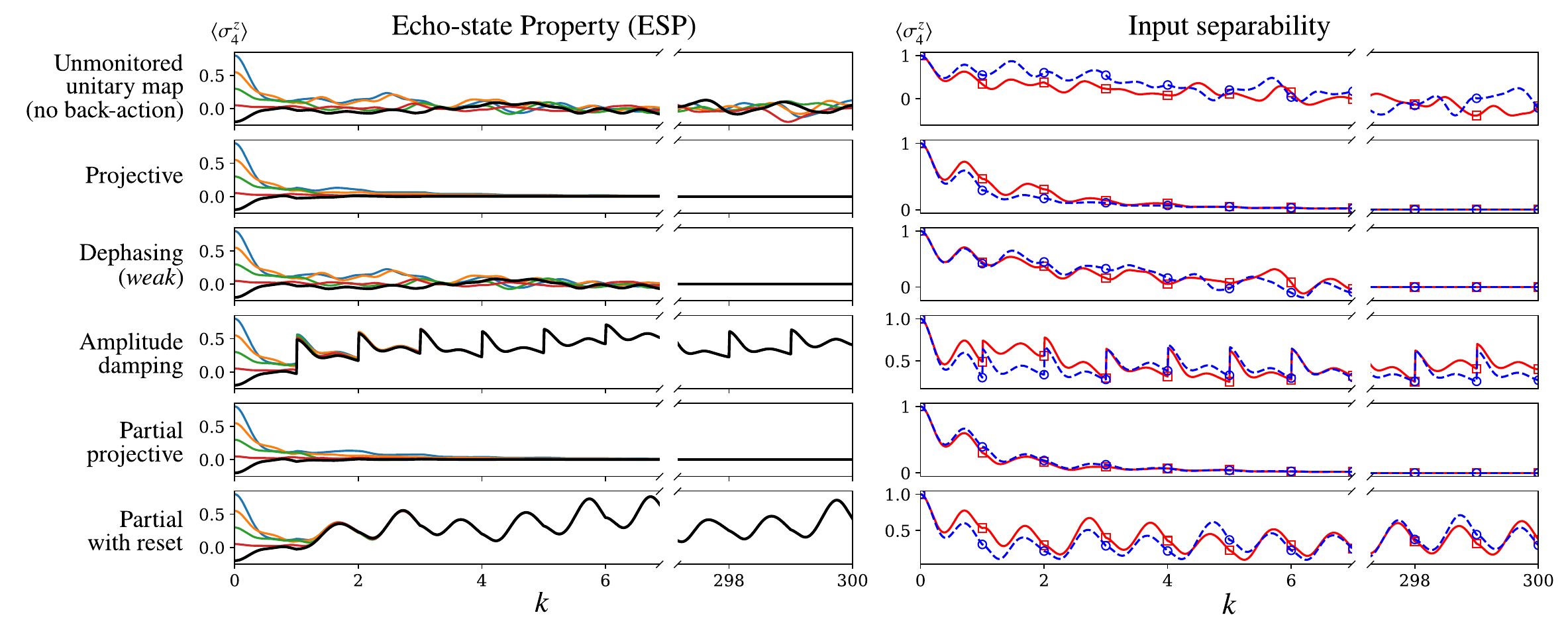}
    \caption{Dynamics of $\langle \sigma_4^z \rangle$ for the input-dependent unitary map in Eq. (\ref{eq:unitary}) with disordered Ising-Hamiltonian evolution under different measurement schemes. The left panels show the dynamics starting from five different initial reservoir states under the same input sequence, while the right panels show the dynamics generated by two different input sequences with a fixed initial state ($\ket{0}\!\!\bra{0}$). Unmonitored unitary maps (\textit{no back-action}) are not suitable for quantum reservoir computing, as they fail to satisfy the echo-state property (ESP). \textit{Projective}, \textit{dephasing}, and \textit{partial projective} measurement schemes are also unsuitable, since although they satisfy the ESP, the asymptotic state is input independent. In contrast, \textit{Amplitude damping} measurements and \textit{partial projective} measurements \textit{with reset} satisfy both the ESP and input separability, thereby providing the necessary dynamical conditions for effective QRC. As discussed in the main text, the partial projective measurement with reset is equivalent to an erase-and-write map. Similar results are found for the rest of $\sigma^z_i$ and their higher-order correlations.}
    \label{fig:dynamics_ESP_input}
\end{figure*}

For unitary (unmonitored) evolution, the dynamics reflects an initial state dependence, not displaying ESP. As indicated in Table \ref{table:monitored-qrc}, non-selective projective or dephasing measurements [Eq. (\ref{eq:deph-meas-ops})] are not strictly contractive, and neither is the composition of unitary dynamics with these measurement maps. Interestingly, the repeated composition becomes strictly contractive, as we anticipated in the previous subsection: contractivity can emerge in the composition even if absent in constituent maps. Indeed, the ESP is satisfied (see Appendix \ref{app:strict}). Still, these measurement schemes are unital and therefore yield input-independent fixed points, thereby preventing input separability \cite{thermalization-monitoring}, as shown in Fig. \ref{fig:dynamics_ESP_input} right panels. The same behavior appears if the system, evolving unitarily, is measured projectively in some of its qubits (partial projective panels in Fig. \ref{fig:dynamics_ESP_input}, where the first $N_P=3$ qubits of the $N$-qubit reservoir are measured). In contrast, a non-selective amplitude damping (AD) measurement is both strictly contractive and non-unital, analogously to unmonitored QRC maps with AD dissipation \cite{domingo_noise_qrc}, so that the monitored map satisfies the ESP and input separability for any input-dependent unitary.

The composition of unitary dynamics with a partial measurement with reset can also satisfy the necessary requirements for a viable QRC map \cite{hu_et_al} (partial with reset panels in Fig. \ref{fig:dynamics_ESP_input}, where, again, $N_P=3$ and $N_E=1$). For this measurement scheme, not all unitaries $U(s_k)$ generate suitable dynamics. To understand why, let us recall the structure of the partial measurement with reset: the system is partitioned into two subsets of qubits, $P$, which are projectively measured and reset, and $M$, which are not. As shown in Section \ref{subsec:partial}, this scheme is equivalent to an indirect measurement of $M$ by the probe qubits $P$, with measurement operators determined by the unitary $U(s_k)$ as
\begin{equation}
    \Omega_{\mathbf{b}_m}(s_k) = \bra{\mathbf{b}_m}_P U(s_k)\ket{0\cdots0}_{\!P}.
\end{equation}
Because of the reset, the evolution of $M$ corresponds to an open quantum system in a collisional model with governing map,
\begin{equation}
    \mathcal{T}'(s_k)\rho_{k-1}^M = \mathcal{M}_M(s_k)\rho_{k-1}^M=\sum_m\Omega_{\mathbf{b}_m}(s_k)\rho_{k-1}^M\Omega^\dagger_{\mathbf{b}_m}(s_k)
\end{equation}
where the input is encoded through the measurement process. The requirement on $U(s_k)$ is therefore that the reduced map $\mathcal{T}'(s_k)=\mathcal{M}_M(s_k)$ be strictly contractive with an input-dependent fixed point, which can be verified through the spectral properties of the map \cite{hu_et_al}. This can be ensured provided the unitary is sufficiently scrambling such that information is spread across all reservoir qubits.

We contrast two opposite situations with examples. First, a suitable construction is that of Eq. (\ref{eq:unitary}) with $U_\text{ev}$ chosen as a Haar-random unitary or an ergodic many-body Hamiltonian evolution \cite{dynamical-phase-transitions}. However, other constructions may fail. An example is the family of unitaries of the form
\begin{equation}
    U(s_k)=\left(\prod_{i=1}^{N/2}U_{P_i,M_i}^{(\text{deph})}\right)(\mathbb{I}^{\otimes N/2}\otimes U_M(s_k)),
\end{equation}
where $U_{P_i,M_i}^{(\text{deph})}$ couples qubit $i$ of $P$ with qubit $i$ in $M$ as in Eq. \eqref{eq:unitary-deph}, $U_M(s_k)$ is an encoding unitary acting on $M$, and $N_M=N_P=N/2$. This construction is equivalent to an input-dependent unitary evolution with dephasing noise on a subset of qubits in $M$, which was shown above to yield a reservoir that does not satisfy input separability. Partial measurement schemes with posterior reset thus shift the focus from measurement engineering of $M$ to reservoir dynamics engineering of $M$+$P$, both approaches being fundamentally equivalent as shown in Sec. \ref{sec:monitoring-qubits}.

Certain monitoring can make QRC not only viable but also influence performance. We analyze this aspect for the monitored unitary QRC map with AD monitoring in all qubits (see Section \ref{subsec:AD}) and unitary evolution as in Eq. (\ref{eq:unitary}) with an Ising Hamiltonian. The AD measurement features a tunable strength $\theta$ that can be optimized for task performance \cite{sannia}. Figure \ref{fig:stm_AD_meas} shows the performance on the linear short-term memory (STM) task, where the reservoir is trained to reconstruct a past input $y_k = s_{k-\tau}$ at delay $\tau$ of a 5-qubit monitored reservoir. At $\theta=\pi$, the measurement corresponds to a projective measurement followed by a reset, which erases all memory of past inputs and restricts the reservoir to learning only the current one. For finite-strength measurements ($0<\theta<\pi)$, a finite memory of past inputs is retained. For weak measurements ($\theta\to 0$), the long-term memory is enhanced at the expense of reducing the performance for small delays. Conversely, strong back-action (lines corresponding to $0.4\lesssim\theta<\pi$)  enhances the linear short-term memory of the reservoir while suppressing sensitivity to distant past inputs. These results display the sensitivity of the performance on the map even when ESP and separability are fulfilled and are compatible with the relation between quantum coherences and reservoir memory found in \cite{palacios-coherences}.

\begin{figure}
    \centering
    \includegraphics[width=1.0\linewidth]{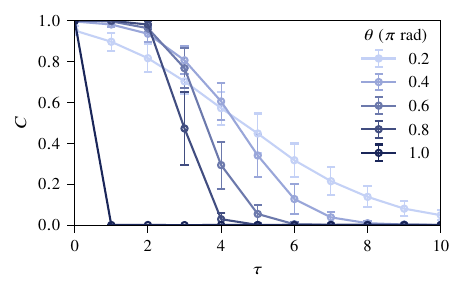}
    \caption{Correlation between the predicted and target time series for the linear short-term memory (STM) task for different measurement strengths $\theta$ of the Amplitude Damping measurement channel following the unitary unmonitored map defined in Eq. (\ref{eq:unitary}) with Ising Hamiltonian evolution. We recall that $\theta=\pi$ corresponds to a projective measurement with posterior reset, while $\theta\to 0$ represents a weak measurement. Error bars indicate the standard deviation obtained with 100 random reservoir parameters and random input series.}
    \label{fig:stm_AD_meas}
\end{figure}

Monitoring can thus serve as a resource for fulfilling the necessary conditions for QRC. In particular, we have shown  two monitoring schemes that lead to suitable QRC maps. These are the AD monitored QRC map, which is equivalent to unitary maps with an amplitude damping noise channel \cite{sannia,palacios-coherences}, and the partial monitoring with reset QRC map \cite{hu_et_al}. The latter is formally analogous to an erase-and-write map \cite{FujiiNakajima2017}, which will be treated in detail in the next subsection. Indeed, known suitable QRC maps reported in the literature (but not addressing a clear monitoring protocol) can be used as blueprints for designing monitoring schemes that enable real-time online execution of QRC protocols. To provide an example, any QRC map that has reset operations can naturally be made an online monitored QRC map by incorporating projective measurements before the reset, as this has no impact on the dynamics of the reservoir.

\subsection{Comparison of measurement schemes} \label{subsec:erase-write}

While monitoring can be a resource, making an evolution map suitable for QRC as seen in the previous subsection, some designs and implementations can display ESP and input separability even prior to monitoring being accounted for \cite{mujal_weak, chen-nurdin-yamamoto} as shown in Table \ref{table:monitored-qrc}, last row. This lifts the constraints on the measurement schemes that are unsuitable when the unmonitored dynamics is unitary, such as dephasing measurements, an approach that has been explored by monitoring a noisy unitary evolution projectively \cite{yasuda} and an erase-and-write map with dephasing measurements \cite{mujal_weak}. The unified framework of monitored QRC introduced above enables a systematic comparison of the effect of different measurement schemes on task performance for a common unmonitored dynamics, providing a tool for selecting the optimal scheme for a given task. Any suitable unmonitored QRC dynamics can be complemented with a monitoring map resulting in a suitable monitored QRC (see Table \ref{table:monitored-qrc}). However, the effect that measurement back-action has on task performance must be studied in detail for each reservoir.

As an example, here we consider an erase-and-write map, which has been extensively studied with different quantum systems \cite{FujiiNakajima2017, NokkalaGaussian, llodra, llodra2, prl_syk}. This model introduces dissipation by resetting a subset $E$ of reservoir qubits before each input injection. A widely used special case \cite{FujiiNakajima2017} takes $E$ to be a single qubit, writes the input into that qubit via $\ket{\psi(s_k)}=\sqrt{1-s_k}\ket{0}+\sqrt{s_k}\ket{1}$, and a scrambling unitary $U_{\text{ev}}$:
\begin{equation}\label{eq:erase-and-write-2}
 \Lambda(s_k)\rho_k = U_{\text{ev}}\bigl(\dyad{\psi(s_k)}\otimes \Tr_E[\rho_k]\bigr) U_{\text{ev}}^\dagger.
\end{equation}

The partial reset mechanism introduces dissipation and ensures that the reservoir satisfies the echo-state and fading-memory properties. In addition, performing a measurement on the reset qubits before reset does not alter the dynamical map. The partial measurement with reset is therefore dynamically equivalent to the erase-and-write map, with the measurement providing a means to implement an online erase-and-write protocol without introducing any additional dissipation.

\begin{figure}
    \centering
    \includegraphics[width=1.0\linewidth]{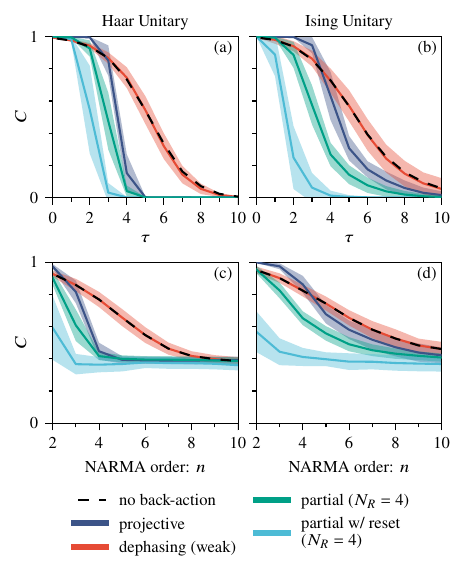}
    \caption{Correlation between the predicted and target time series for (a,b) the linear short-term memory (STM) task at different delays $\tau$, and (c,d) the NARMA$n$ task, using the monitored erase-and-write map with $N_E = 1$. Each panel compares the performance obtained with different measurement schemes $\mathcal{M}$. Panels (a,c) show results averaged over 100 realizations of Haar-random unitaries and input sequences, while panels (b,d) correspond to an Ising Hamiltonian evolution, with results averaged over 100 realizations of the reservoir parameters and input sequences. In all cases, the feature vector is constructed from the expectation values $\langle \sigma_i^z \rangle$, for $i=1,\dots,N$, with $N=5$, together with all higher-order correlation terms. The weak measurement results are obtained using a dephasing measurement with strength $\theta = 0.9\pi/2$. Shaded regions indicate one standard deviation around the mean.}
    \label{fig:stm_narma_qrc}
\end{figure}

Fig. \ref{fig:stm_narma_qrc} shows the performance of the monitored erase-and-write reservoir on the STM and NARMA$n$ tasks (see Appendix \ref{app:benchmark-reservoir}), under different monitoring schemes: $z$-projections of all qubits, dephasing measurements with $\theta=0.9\pi/2$ on all qubits (leading to weak monitoring), and partial projective measurements on $N_P=4$ of 5 qubits with and without posterior reset. We highlight two main aspects. First of all, for both benchmarks and for small delays $\tau$ ($=1,2$ e.g.) or low order $n$ (up to 3-4 for the Ising evolution), projective measurements on all reservoir qubits yield the highest performance, even surpassing the unmonitored case, with no back-action. This indicates that measurement-induced disturbance can enhance the computational capabilities of the reservoir \cite{mujal_weak, palacios-coherences, memorynonlinearitytradeoffquantumreservoir, franceschetto}. However, for these kinds of measurements, this performance advantage diminishes or even disappears for tasks requiring long-term memory. In this regime, measurement back-action becomes detrimental, and weak dephasing measurements yield the best performance, closely matching the results obtained in the absence of back-action.

A second observation is that partial measurements do not outperform the full projective measurement scheme in either benchmark, and the partial measurement with reset consistently yields the poorest performance across all memory regimes. This can be attributed to the combined effect of a reduced feature space and the strong erasure of information induced by the reset operation. It should be noted, however, that the indirect measurement schemes require additional auxiliary qubits, whereas the partial measurement with reset is implemented directly without this overhead.

\subsection{Impact of finite experimental realizations}\label{subsub:finite}

The relative performance of different monitoring strategies is not necessarily preserved in the presence of sampling noise, as we address here. In the idealized infinite-resource scenario, where an infinite number of trajectories are used to construct the features, all measurement schemes that probe the same observable would yield identical expectation values. However, these schemes cannot be regarded as equivalent, as the measurement back-action on the system generally differs between them in the limit of large samples. The impact of a finite number of quantum trajectories, $N_\text{shots}$, used to estimate the outcome probabilities, can be analyzed by adding a stochastic contribution to the expectation values obtained in the infinite-resources limit
\begin{equation}\label{eq:obs-finite-res}
    \langle O_i\rangle_\text{finite}\approx\langle O_i\rangle_\text{infinite} + \xi ,
\end{equation}
where $\xi \sim \mathcal{N}(0, \Delta\langle O_i\rangle_\text{max})$ \cite{mujal_weak}. The uncertainties $\Delta\langle O_i\rangle_\text{max}$, which correspond to their state-independent value or, equivalently, their maximum value, depend on both the observable considered and the measurement scheme, as discussed in Appendix \ref{app:finite}. The state-independent uncertainty overestimates the actual error, but provides an upper bound for the error. For the results of the finite-resource scenario, we consider that the feature vector consists only of expectation values of $\sigma^z$ and second-order correlations $\sigma^z\otimes\sigma^z$.

\begin{figure}
    \centering
    \includegraphics[width=1.0\linewidth]{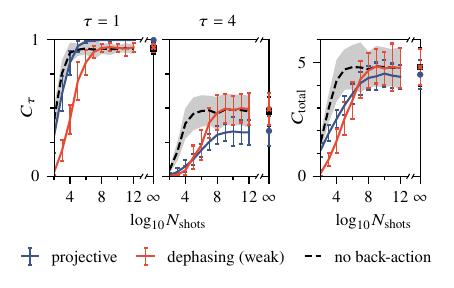}
    \caption{Effect of the finite resources on the Short-Term Memory (STM) capacity for the monitored erase-and-write map. Left panels show the correlation between the target and predicted sequences for delay $\tau=1$ and $\tau=4$ in the STM task. Right panel shows the total capacity computed as $C_\text{total} = \sum_{\tau=0}^{29}C_\tau$. Error bars and shaded region indicate one standard deviation obtained with 100 realizations of the Ising Hamiltonian evolution reservoir with random input series. Here, the features consist of only $\langle\sigma^z_i\rangle$ and $\langle\sigma^z_i\otimes\sigma^z_j\rangle$, and the dephasing (weak) measurement corresponds to $\theta=0.9\pi/2$.}
    \label{fig:stm-shotnoise-ising}
\end{figure}

Projective measurement schemes are more robust to shot noise than weak measurements, since the latter extract less information per shot \cite{mujal_weak}. For the STM task at short delay (Fig. \ref{fig:stm-shotnoise-ising}, $\tau=1$), where projective measurements give the best performance in the infinite-shot limit, saturating performance requires fewer experimental realizations with projective measurements, $\sim 10^6$, than with weak measurements, $\sim 10^8-10^9$. However, the impact of strong measurement back-action on the dynamics can be detrimental for the long-term memory of the reservoir, resulting in weak monitoring outperforming projective measurements, as previously shown. This behavior persists in the finite-shot regime, where the interplay between back-action and measurement statistics reflects the fundamental trade-off between information extraction and measurement-induced disturbance. Consequently, as the number of realizations increases, the robustness of projective measurements is insufficient to sustain better performance in long-term memory tasks, and weak measurements become preferable even in the finite-resource regime. As shown in Fig. \ref{fig:stm-shotnoise-ising}, the weak monitoring scheme outperforms the projective one at $N_\text{shots}\gtrsim 10^5$ for the STM task at delay $\tau=4$ (left panel) and at $N_\text{shots}\gtrsim 10^7$ for the total STM capacity (right panel).

The optimal measurement scheme under finite resources thus depends not only on measurement strength but on the specific task. These results are not restricted to the monitored erase-and-write map with an Ising Hamiltonian, and similar behavior is obtained with Haar-random evolution $U_\text{ev}$ and with an AD-monitored unitary map section \ref{subsec:monitoring-resource} (see Appendix \ref{app:finite}). 

\section{Time-multiplexing and measurement back-action}\label{sec:time-mult}

\begin{figure*}
    \centering
    \includegraphics[width=1.0\linewidth]{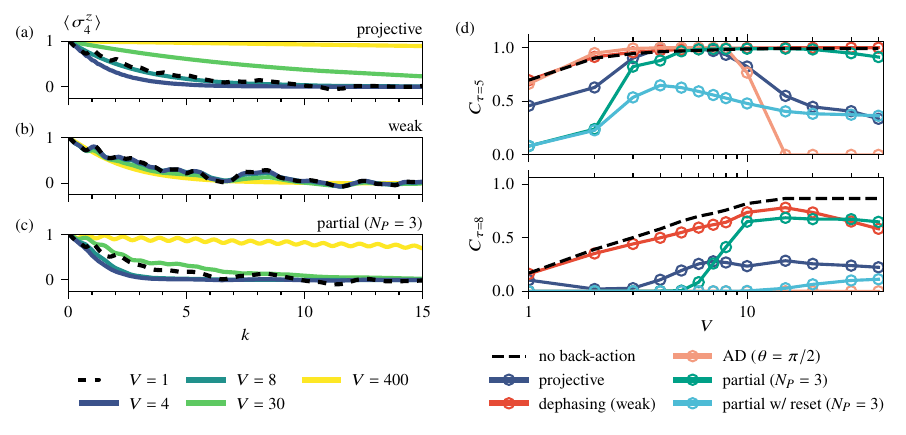}
    \caption{(a-c) Dynamics of $\langle\sigma^z_4\rangle$ under time-multiplexing. $V$ measurements are performed between consecutive input injections $k=1,2,\dots$ at intervals of $\Delta t/V$ for QRC, with an Ising reservoir operated as an erase-and-write map and under different monitoring schemes. Increasing $V$ introduces more back-action, effectively intensifying the effect of measurement dissipation. At the same time, the higher measurement rate can slow the unmonitored dynamics for large $V$, freezing the evolution as $V\to\infty$. (d) Impact of time-multiplexing on the short-term memory capacity $C$ at delay $\tau=5$ and $\tau=8$, applying time-multiplexing with $V$ virtual nodes for different measurement schemes. The weak dephasing measurement is set to a strength $\theta=0.9\pi/2$.}
    \label{fig:stm-timemultiplexing-ising}
\end{figure*}

In addition to measurement strength, another key resource is the dimensionality of the extracted features, which can be enhanced through time-multiplexing \cite{Appeltant2011, FujiiNakajima2017, QRC_IPC}. Originally introduced to enable classical reservoir computing with a single dynamical node, this technique involves performing stroboscopic measurements on the node during a fixed-input evolution. By extracting additional temporal features from the system dynamics, time-multiplexing effectively emulates a higher-dimensional reservoir and can significantly improve performance for classical reservoirs.

In the context of QRC, however, time-multiplexing amplifies the effect of the information-disturbance trade-off. Increasing the number of measurements augments the number of features $V$, referred to as virtual nodes, extracting more information from the reservoir and potentially enhancing performance. At the same time, for online monitored protocols, this leads to greater disturbance of the reservoir state, potentially degrading its memory. In this case, the strength and type of back-action can have a significant impact on the resulting task performance. A detailed study of time-multiplexing in monitored QRC is therefore essential.

The monitored QRC framework introduced above can be adjusted to accommodate time-multiplexing. Consider a single step of the monitored protocol between input injections which lasts a time $\Delta t$. Time-multiplexing consists of performing $V$ stroboscopic measurements within this time interval, subdividing the original single-step map $\mathcal{T}$ into $V$ sub-maps each of duration $\Delta t/V$. The set of $V$ measurements forms the expanded feature vector for input step $k$, generalizing Eq. \eqref{eq:output-features}:
\begin{equation}
    x_{k,i}^{(v)} = \text{Tr}[\Omega_i^\dagger \Omega_i \Lambda_v(s_k)\rho_k^{(v-1)}].
\end{equation}
This is obtained from all measurement outcomes $i$ and $v=1,\dots,V$, where 
\begin{equation}
    \rho_k^{(v)}=\left[\prod_{j=1}^{\overset{v}{\longleftarrow}}(\mathcal{M}_j\circ{\Lambda_j})(s_k)\right]\rho_{k-1}^{(0)},
\end{equation}
and $\rho_{k}^{(V)}=\rho_{k+1}^{(0)}$. Here, $\Lambda_j$ and $\mathcal{M}_j$ denote the new unmonitored and measurement maps and, for continuous-time dynamics, are obtained by rescaling the original maps to the virtual node timescale such that their joint evolution lasts $\Delta t/V$.

We now study the effect of time-multiplexing in the erase-and-write map of Section \ref{subsec:erase-write} with an Ising Hamiltonian evolution $U_\text{ev}$ of duration $\Delta t$, neglecting the time required for the encoding $U_\text{enc}(s_k)$ and each measurement. The first unmonitored sub-map $\Lambda_1$ takes the form given by Eq. (\ref{eq:erase-and-write-2}), while $\Lambda_v\rho = U_\text{ev}\rho U_\text{ev}^\dagger$ for $v\neq1$, and the Hamiltonian evolution $U_\text{ev}$ is rescaled to duration $\Delta t/V$. The measurement map is kept fixed throughout the protocol. $\mathcal{M}_v=\mathcal{M}$ for all $v$.

As a consequence of this rescaling, time-multiplexing also entails a change in the measurement rate. As this rate increases, the back-action progressively inhibits the unmonitored dynamics. In the limit $V\to\infty$, the state evolution is effectively frozen, displaying clear signatures of a quantum Zeno effect \cite{PhysRevA.41.2295} (Fig. \ref{fig:stm-timemultiplexing-ising}a,c). In this regime, the system cannot thermalize \cite{thermalization-monitoring}, as frequent measurements prevent energy transport and entropy redistribution. Reducing the measurement strength disturbs the system less, allowing for maintaining a good performance for a larger time-multiplexing than in the projective case (Fig. \ref{fig:stm-timemultiplexing-ising}b).

Figure \ref{fig:stm-timemultiplexing-ising}d illustrates how measurement back-action affects the linear memory capacity $C_\tau$ of the reservoir, highlighting the trade-off between information extraction and measurement-induced disturbance as the number of virtual nodes $V$ increases. Unlike the restarting protocol (no back-action), where the performance saturates, all measurement schemes exhibit an optimal value of $V$ that maximizes $C_\tau$, with the optimum depending on the measurement scheme. Similar results are obtained for other delays $\tau$, with the optimum also depending on $\tau$. Time-multiplexing thus provides an additional tunable resource for enhancing task performance, where both the number of virtual nodes and the measurement strength can be jointly optimized for a given task.

\section{Conclusions}

Online monitoring of quantum systems intrinsically intertwines information extraction and state disturbance. In this work, we developed a general framework for online quantum reservoir computing based on monitored quantum dynamics, unifying previously proposed projective, weak, and partial measurement schemes within a common description grounded in indirect measurement theory. This framework establishes measurement back-action not as an obstacle to be circumvented, but as a controllable dynamical resource that can be engineered to endow a quantum reservoir with the properties required for temporal information processing. 

A central result is that monitoring can provide the effective dissipation and non-unitality necessary for the echo-state property, fading memory, and input separability, even when the underlying unmonitored dynamics is unsuitable for in-memory QRC (e.g., unitary or unital noise evolutions). In particular, we derived a necessary and sufficient criterion (Theorem \ref{thm:disjoint} in Appendix \ref{app:strict}) for strict contractivity that emerges from the repeated composition of monitoring and reservoir maps that are not individually contractive, delimiting which measurement schemes can and cannot supply the missing dissipation. From this perspective, measurements play a dual role: they simultaneously extract the features used for computation and shape the reservoir dynamics itself. Not all measurement schemes, however, facilitate the viability of QRC with unitary unmonitored dynamics. This observation places measurement engineering on equal footing with Hamiltonian design and dissipation engineering as a tool for constructing quantum reservoirs. Furthermore, this framework reveals a clear and exploitable trade-off: stronger back-action enhances short-term memory at the cost of long-term retention, while weaker measurements preserve memory at the expense of information gain per shot. The presence of sampling noise also enters as a monitoring-dependent key aspect.

We further assessed time-multiplexing in the online setting, showing a richer structure than in restarting protocols. Increasing the number of virtual nodes expands the feature space but simultaneously amplifies measurement-induced disturbance, leading to an optimal value that balances information extraction against back-action and eventually freezes the reservoir dynamics through the quantum Zeno effect. This reveals measurement strength and the number of virtual nodes as jointly tunable parameters for optimizing monitored QRC performance without modifying the physical reservoir.

The framework provides a general route for designing quantum reservoirs through controlled monitoring. The correspondence between measurement schemes and effective dissipative dynamics suggests that many tools developed in the contexts of quantum trajectories, open quantum systems, and measurement-based control can be repurposed for quantum machine learning. Looking for experimental implementations, the indirect measurement schemes considered here are already within reach of current superconducting platforms, where fast reset operations are available, making monitored QRC a concrete target in noisy intermediate-scale quantum devices. While some implementations of monitored reservoirs have already been performed, this work provides a theoretical framework for understanding the impact of the monitoring on the QRC dynamics and the tools to engineer suitable monitoring schemes. More broadly, the general correspondence established between monitoring schemes and dissipative QRC maps suggests that the design principles developed here may extend beyond the qubit setting, for instance, in continuous-variable setups.

\begin{acknowledgements}
    We acknowledge support from the Spanish State Research Agency, through the María de Maeztu project CEX2021-001164-M, funded by MICIU/AEI/10.13039/501100011033; through the CoQuSy project PID2022-140506NB-C21 and -C22 funded by MICIU/AEI/10.13039/50110001103 and by ERDF, EU; and through the QuantCom project CNS2024-154720, funded by MICIU/AEI/10.13039/501100011033 and co-funded by the European Union; the project is funded under the Quantera II program that has received funding from the EU’s H2020 research and innovation program under Grant Agreement No. 101017733, and from the Spanish State Research Agency (project CoQuaDis PCI2024-153446) funded by MICIU/AEI/10.13039/50110001103; MINECO through the QUANTUM SPAIN project, and EU through the RTRP - NextGenerationEU within the framework of the Digital Spain 2025 Agenda; and CSIC's Quantum Technologies Platform (QTEP). O. M.-S. also acknowledges support from Grant PREP2022-000094 funded by MICIU/AEI/10.13039/501100011033 and FSE+, and thanks Rodrigo Martínez-Peña, Luísa Toledo Tude, and Nathan Keenan for fruitful discussions.
\end{acknowledgements}

\appendix

\section{Finite resources}\label{app:finite}

In an idealized scenario, expectation values are computed assuming complete knowledge of the system state $\rho$ or, equivalently, an infinite number of experimental repetitions of the monitored dynamics (quantum trajectories). In this regime, all compatible measurement schemes that probe a given observable yield the same expectation value. In realistic settings, however, only a finite number of trajectories, $N_\text{shots}$, is available, leading to statistical fluctuations that must be accounted for when estimating expectation values. As a result, the statistical uncertainty in the estimate of the expectation value scales as $\Delta\langle O\rangle \sim1/\sqrt{N_\text{shots}}$.

In addition, the magnitude of these statistical fluctuations depends on the measurement scheme employed. Projective measurements are maximally informative about a given observable, as each outcome yields the largest amount of information per shot. By contrast, finite-strength measurements extract less information in a single run and therefore exhibit larger statistical uncertainty for a fixed number of shots. Nevertheless, weak measurements can be advantageous in situations where minimizing measurement back-action is essential.

The dephasing and amplitude damping measurements introduced in the main text are both compatible measurements of $\sigma^z$ and all $N$-qubit correlations constructed from it. Here, we derive expressions for the statistical uncertainty in estimating the expectation value of the single-qubit observable for both schemes, as well as the relation between outcome probabilities and the two-qubit correlator $\sigma^z_i\otimes\sigma^z_j$, together with its statistical uncertainty.

Using (\ref{eq:deph-meas-sigma}) and standard error propagation, the statistical uncertainty in the estimate of the expectation value of $\sigma^z$ for the dephasing measurement scheme can be computed as
\begin{equation}
    \Delta\langle\sigma^z\rangle^{(\text{deph})}=\frac{1}{\sqrt{N_\text{shots}}}\cdot\frac{\sqrt{1-\cos^2(\theta)\cdot\langle\sigma^z\rangle^2}}{\cos\theta},
\end{equation}
One can observe that $\Delta\langle\sigma^z\rangle^{(\text{deph})}\geq \Delta\langle\sigma^z\rangle$, with the equality being satisfied in the projective measurement limit ($\theta=0$). As anticipated, finite-strength measurements of $\sigma^z$ ($0<\theta<\pi/2$) are less informative per experimental shot than projective measurements. Nevertheless, in the ideal infinite-ensemble limit ($N_\text{shots}\to\infty$), both measurement schemes yield equivalent expectation values.

The expression for the expectation value of $\sigma^z$ in terms of the AD measurement outcome probabilities was given in Eq. (\ref{eq:error-sigmaz-ad}). Its corresponding statistical uncertainty can be computed as
\begin{equation}\label{eq:unc-ad}
    \Delta\langle\sigma^z\rangle^{(\text{AD})} = \frac{1}{\sqrt{N_\text{shots}}}\cdot\frac{\sqrt{1-\langle\sigma^z\rangle_A^2}}{\sin^2(\theta/2)}
\end{equation}
where $\langle\sigma^z\rangle_A=\text{Pr}_\rho(0)-\text{Pr}_\rho(1)$ and can be related to the system expectation value $\langle\sigma^z\rangle$ as 
\begin{equation}
    \langle\sigma^z\rangle_A=\sin^2(\theta/2)\langle\sigma^z\rangle+ \cos^2(\theta/2).
\end{equation}

For the dephasing measurement channel, the expectation values of the 2nd order correlators of $\sigma^z$ can be computed from the outcome probabilities as
\begin{equation}
    \langle\sigma^z_i\otimes\sigma^z_j\rangle = \frac{1}{\cos^2\theta}\langle\sigma^z_i\otimes\sigma^z_j\rangle_A ,   
\end{equation}
where $\langle\sigma^z_i\otimes\sigma^z_j\rangle_A= \text{Pr}_\rho(0_i,0_j) + \text{Pr}_\rho(1_i,1_j) - \text{Pr}_\rho(1_i,0_j) - \text{Pr}_\rho(0_i,1_j)$. The associated uncertainty in the estimate of the expectation value is therefore
\begin{equation}\label{eq:unc-deph-2}
    \Delta\langle\sigma^z_i\otimes\sigma^z_j\rangle^{(\text{deph})}=\frac{1}{\sqrt{N_\text{shots}}}\cdot\frac{\sqrt{1-\cos^4(\theta)\cdot\langle\sigma^z_i\otimes\sigma^z_j\rangle^2}}{\cos^2(\theta)}
\end{equation}
It can be observed that the impact of finite-strength measurements becomes increasingly pronounced for higher-order $\sigma^z$ correlations, resulting in larger statistical fluctuations in the finite-resources regime.

The second-order correlations for the AD measurement channel and its uncertainty can be computed as
\begin{widetext}
\begin{equation}
\langle \sigma^z_i \otimes \sigma^z_j \rangle = \frac{\langle \sigma^z_i \otimes \sigma^z_j \rangle_A - \cos^2(\theta/2)\left( \langle \sigma^z_i \rangle_A + \langle \sigma^z_j \rangle_A \right) + \cos^4(\theta/2)}{\sin^4(\theta/2)}
\end{equation}
\begin{equation}
\Delta\langle \sigma^z_i \otimes \sigma^z_j \rangle^{(\mathrm{AD})} = \frac{1}{\sqrt{N_{\mathrm{shots}}}} \cdot \frac{1}{\sin^4(\theta/2)} \left( \sqrt{1 - \langle \sigma^z_i \otimes \sigma^z_j \rangle_A^2} + \cos^2(\theta/2) \left( \sqrt{1 - \langle \sigma^z_i \rangle_A^2} + \sqrt{1 - \langle \sigma^z_j \rangle_A^2} \right) \right)
\label{eq:unc-ad-2}
\end{equation}
\end{widetext}
where, again, $\langle\sigma^z\rangle_A=\text{Pr}_\rho(0)-\text{Pr}_\rho(1)$ and $\langle\sigma^z_i\otimes\sigma^z_j\rangle_A= \text{Pr}_\rho(0_i,0_j) + \text{Pr}_\rho(1_i,1_j) - \text{Pr}_\rho(1_i,0_j) - \text{Pr}_\rho(0_i,1_j)$ are computed from the measurement statistics.

Since the ranges of the expectation values are $-1\leq\langle\sigma^z\rangle\leq1$ and $-1\leq\langle\sigma^z_i\otimes\sigma^z_j\rangle\leq1$, the maximum uncertainty is obtained when these expectation values are equal to 0. Therefore, one can consider the state-independent uncertainty as the upper bound of the previous uncertainties \cite{mujal_weak}, which is obtained by substituting $\langle\sigma^z\rangle=0$ and $\langle\sigma^z_i\otimes\sigma^z_j\rangle=0$ in Equations (\ref{eq:unc-ad})--(\ref{eq:unc-ad-2}). This corresponds to the $\Delta\langle O\rangle_\text{max}$ used to compute the expectation values in the finite resource scenario, as given by Equation (\ref{eq:obs-finite-res}).

Fig. \ref{fig:stm-shotnoise-AD} shows the impact on task performance of a finite number of realizations for the AD-monitored unitary map introduced in Sec. \ref{subsec:monitoring-resource}. As observed in Sec. \ref{subsub:finite}, the optimal measurement scheme depends on both the number of shots and the specific task: strong measurements, which extract more information per shot, are preferable for short-term memory tasks, while weak measurements become advantageous when long-term memory is required.

\begin{figure}
    \centering
    \includegraphics[width=1.0\linewidth]{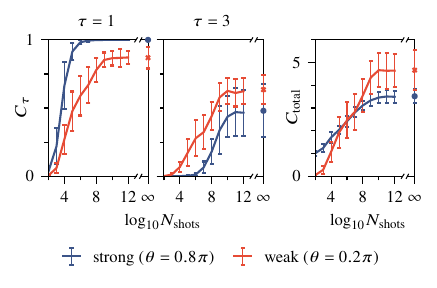}
    \caption{Effect of the finite resources on the Short-Term Memory (STM) capacity for the monitored unitary reservoir with AD monitoring (Section \ref{subsec:monitoring-resource}) in the weak and strong measurement regimes. Left panels show the correlation between the target and predicted sequences for delay $\tau=1$ and $\tau=3$. Right panel shows the total capacity computed as $C_\text{total} = \sum_{\tau=0}^{29}C_\tau$. Error bars and shaded region indicate one standard deviation obtained with 100 random realizations of the Ising Hamiltonian and random input series. Here, the features consist of only $\langle\sigma^z_i\rangle$ and $\langle\sigma^z_i\otimes\sigma^z_j\rangle$.}
    \label{fig:stm-shotnoise-AD}
\end{figure}

\section{QRC benchmark tasks, training and unmonitored dynamics}\label{app:benchmark-reservoir}

In order to assess the performance of the reservoirs, we use two benchmark tasks \cite{benchmarks}: the linear Short-Term Memory (STM) and the Normalized Auto-Regressive Moving Average of order $n$ (NARMA$n$) tasks, which assess the linear and nonlinear memory of the reservoir. The STM task is defined as:
\begin{equation}
    y_k = s_{k-\tau},
\end{equation}
where $\tau\in\mathbb{N}$ is the delay. The NARMA$n$ task, where $n\in\mathbb{N}$, is defined as:
\begin{equation}
    y_{k} = 0.3 y_{k-1} + 0.05 \sum_{\tau = 1}^{n} y_{k-1} y_{k-\tau} + 1.5 s_{k - n} s_{k-1} + 0.1
\end{equation}
Performance is quantified via the squared Pearson correlation coefficient between the target sequence $\mathbf{y}$ and the predicted sequence $\hat{\mathbf{y}}$,
\begin{equation}\label{eq:capacity}
    C=\frac{\text{cov}^2(\mathbf{y}, \hat{\mathbf{y}})}{\sigma(\mathbf{y})^2\sigma(\hat{\mathbf{y}})^2},
\end{equation}
where $\text{cov}$ denotes the covariance and $\sigma$ the standard deviation. This metric, often referred to as the \textit{capacity}, satisfies $C=1$ when the reservoir perfectly predicts the target and $C=0$ when it fails completely.

In both cases, a length $L=3000$ input sequence has been considered. The first 1000 steps are washed out, such that the reservoir is able to forget its initial state. Steps $k\in\mathcal{K}_\text{train}=[1000,1999]$ are used for training, which is performed by optimizing Eq. \eqref{eq:output} via linear regression. The optimized weight vector $\mathbf{W}$ is thus computed as
\begin{equation}
    \mathbf{W}=\mathbf{y}_\text{train}\mathbf{X}_\text{train}^+,
\end{equation}
where $\mathbf{y}_\text{train}=\{y_k \mid k\in\mathcal{K}_\text{train}\}$ is a vector containing the target training sequence, $\mathbf{X}_\text{train}^+$ is the Moore-Penrose pseudo-inverse of the training feature matrix $\mathbf{X}_\text{train}=\{\mathbf{x}_k\mid k\in\mathcal{K}_\text{train}\}$. Finally, the last 1000 steps are used to test the performance of the reservoir by computing the correlation between the target and the output sequences through Eq. \eqref{eq:capacity}. For the STM task, the input is chosen as a uniform random sequence, $\mathbf{s}=\{s_k | s_k \sim U[0,1]\}$. For the NARMA$n$ task, the input is rescaled by a factor $c=0.2$ in order to avoid divergences.

In the time-multiplexing scenario, the dimension of the feature vector increases, and the previously used training-sequence length is no longer statistically sufficient to prevent overfitting. Therefore, in this case, the input sequence and training phase are extended to include a larger amount of data (see Table \ref{tab:params}).

\renewcommand{\arraystretch}{1.}
\begin{table}
\centering
\begin{tabular}{l|c}
\textbf{Parameter} & \textbf{Value} \\
\hline
\hline
Number of system qubits $N$ & 5 \\
\hline
Ising Hamiltonian evolution:\\
$h$ & 1.0 \\
$J$ & 1.0 \\
$\Delta t$ & 2.0 \\
$\Delta t'$ (time-multiplexing) & $\Delta t / V$ \\
\hline
STM input & $s_k \sim U[0,1]$ \\
NARMA input & $s_k \sim U[0,0.2]$ \\
\hline
Total length $L$ (standard) & 3000 \\
Washout & $[0,1000)$ \\
Training & $[1000,2000)$ \\
Testing & $[2000,3000)$ \\
\hline
Total length $L$ (multiplexing) & 14000 \\
Washout & $[0,1000)$ \\
Training & $[1000,11000)$ \\
Testing & $[11000,14000)$ \\
\hline
\end{tabular}
\caption{Simulation parameters.}
\label{tab:params}
\end{table}

As discussed in the main text, the evolution unitaries $U_\text{ev}$ [Eqs. \eqref{eq:unitary} and \eqref{eq:erase-and-write-2}] considered in this paper are Haar-random unitaries and Ising Hamiltonian dynamics. These unitaries can provide the scrambling needed for suitable QRC dynamics. Haar-random unitaries are sampled from the Haar measure and kept fixed throughout each simulation. For the Ising Hamiltonian reservoirs, the unitary evolution is constructed as
\begin{equation}
    U_\text{ev}=e^{-iH_\text{Ising}\Delta t}
\end{equation}
assuming $\hbar=1$ and the Hamiltonian $H_\text{Ising}$ is a fully-connected Transverse-field Ising Hamiltonian:
\begin{equation}\label{eq:ising_hamiltonian}
H = \sum_{i>j} J_{ij}\sigma_i^x\sigma_j^x + h\sum_{i}\sigma_i^z   
\end{equation}
Here, $J_{ij}$ are the coupling strengths (distributed randomly between $[-J, J]$), $h$ is the strength of the transverse field, and $\sigma^d_{i}$ with $d\in\{x,y,z\}$ are the Pauli matrices acting on qubit $i$ (i.e. $N$ qubits operators consisting in the tensor product of the $\sigma^d$ Pauli matrix acting on qubit $i$ and identities acting on the rest of the qubits). 

In all our simulations, we have considered $N = 5, J=1, h=1$ and $\Delta t=2$, which corresponds to an ergodic phase of the many-body system \cite{dynamical-phase-transitions}. When time-multiplexing is used, $\Delta t$ is rescaled to $\Delta t/V$ in Eq. \ref{eq:ising_hamiltonian}, where $V$ is the number of virtual nodes. All simulation parameters are summarized in Table \ref{tab:params}.

\section{Strict contractivity of the composition of two CPTP maps}\label{app:strict}

\begin{figure*}
    \centering
    \includegraphics[width=1.0\linewidth]{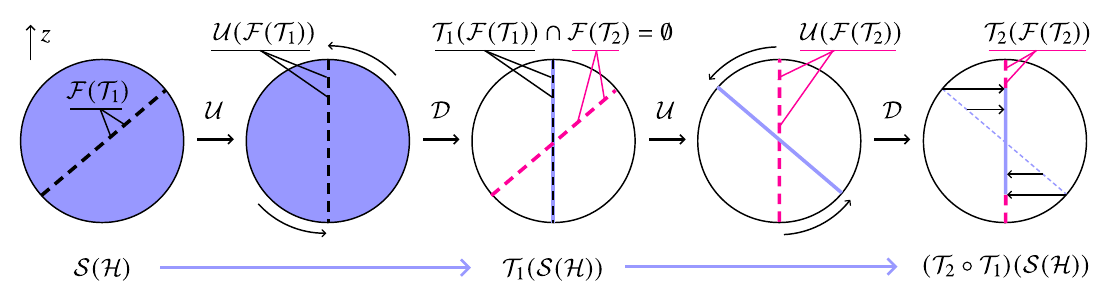}
    \caption{The composition $\mathcal{T}_2\circ\mathcal{T}_1$ of two CPTP maps can be strictly contractive, shrinking distances between all pairs of states, even when the individual maps are not. As an example, we consider the composition of two identical single-qubit maps, $\mathcal{T}_1=\mathcal{T}_2=\mathcal{D}\circ\mathcal{U}$, consisting of a unitary map $\mathcal{U}$ followed by a completely dephasing noise channel $\mathcal{D}$, which are not strictly contractive. The circle represents a cross-section of the single-qubit state space (Bloch sphere) $\mathcal{S}(\mathcal{H})$; the violet region indicates the evolving state space under each channel; and the black and pink dashed lines denote, respectively, the sets of state differences $\rho-\sigma$ that preserve distances under $\mathcal{T}_1$ and under $\mathcal{T}_2$ (denoted as $\mathcal{F}(\mathcal{T}_1)$ and $\mathcal{F}(\mathcal{T}_2)$), evolved by each channel. Under composition, the input of $\mathcal{T}_2$ is restricted to the images of $\mathcal{T}_1$. Therefore, if no pair of states that preserve distances under $\mathcal{T}_2$ is within the $\mathcal{T}_1$-images of pairs of states that preserve distances under $\mathcal{T}_1$ (equivalently, $\mathcal{T}_1(\mathcal{F}(\mathcal{T}_1))\cap\mathcal{F}(\mathcal{T}_2)=\emptyset$), the composition becomes strictly contractive.}
    \label{fig:strict-contractivity}
\end{figure*}

In this Appendix, we formalize the conditions under which the composition of two CPTP maps is strictly contractive, also considering the case (relevant for monitored QRC) where neither of the constituent maps is. In the following, $\mathcal{S}(\mathcal{H})$ denotes the set of density operators on a finite-dimensional Hilbert space $\mathcal{H}$, and $\kappa^{\max}(\mathcal{T})$ is the contraction coefficient of a CPTP map $\mathcal{T}$ defined in Eq. \eqref{eq:cont-coef}. Given that the ratio in Eq. \eqref{eq:cont-coef} depends only on the normalized traceless Hermitian difference $\Delta = \rho - \sigma$, and the set $\{\Delta = \Delta^\dagger : \mathrm{Tr}\,\Delta = 0,\, \|\Delta\|_1 = 1\}$ is compact, the supremum is attained in finite dimensions; we use this fact below.

\begin{proposition}[Submultiplicativity of the contraction
coefficient]\label{prop:submult}
Let $\mathcal{T}_1$, $\mathcal{T}_2$ be CPTP maps on $\mathcal{S}(\mathcal{H})$. Then
\begin{equation}
    \kappa^{\max}(\mathcal{T}_2 \circ \mathcal{T}_1)
    \;\le\; \kappa^{\max}_{2,\mathrm{rest}} \cdot \kappa^{\max}_1
    \;\le\; \kappa^{\max}_2 \cdot \kappa^{\max}_1 ,
\label{eq:submult}
\end{equation}
where $\kappa^{\max}_{2,\mathrm{rest}}$ denotes the contraction coefficient of $\mathcal{T}_2$ restricted to the image $\mathcal{T}_1(\mathcal{S}(\mathcal{H})) \subseteq \mathcal{S}(\mathcal{H})$. In particular, if at least one of the two maps is strictly contractive, so is the composition.
\end{proposition}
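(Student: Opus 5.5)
The plan is to chain the defining supremum of Eq.~\eqref{eq:cont-coef} through the intermediate state space $\mathcal{T}_1(\mathcal{S}(\mathcal{H}))$. Fix any pair $\rho\neq\sigma$ in $\mathcal{S}(\mathcal{H})$ and set $\rho'=\mathcal{T}_1\rho$, $\sigma'=\mathcal{T}_1\sigma$. Both belong to the image $\mathcal{T}_1(\mathcal{S}(\mathcal{H}))$, which is again a set of density operators, so the restricted coefficient $\kappa^{\max}_{2,\mathrm{rest}}:=\sup_{\rho'\neq\sigma'\in\mathcal{T}_1(\mathcal{S})}\|\mathcal{T}_2\rho'-\mathcal{T}_2\sigma'\|_1/\|\rho'-\sigma'\|_1$ is well defined, provided the image contains at least two distinct states.

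Next I split into two cases. If $\rho'=\sigma'$, then $\|\mathcal{T}_2\circ\mathcal{T}_1\rho-\mathcal{T}_2\circ\mathcal{T}_1\sigma\|_1=0$ and the bound holds trivially. Otherwise,
\begin{equation*}
\|\mathcal{T}_2\rho'-\mathcal{T}_2\sigma'\|_1\le\kappa^{\max}_{2,\mathrm{rest}}\|\rho'-\sigma'\|_1\le\kappa^{\max}_{2,\mathrm{rest}}\,\kappa^{\max}_1\|\rho-\sigma\|_1,
\end{equation*}
where the first inequality is the definition of the restricted supremum and the second is the definition of $\kappa^{\max}_1$. Dividing by $\|\rho-\sigma\|_1$ and taking the supremum over pairs gives the first inequality in Eq.~\eqref{eq:submult}. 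In the degenerate situation where the image of $\mathcal{T}_1$ is a single point, one sets $\kappa^{\max}_{2,\mathrm{rest}}=0$ by convention, and the composition is constant, so the inequality is still true.

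The second inequality follows because the restricted supremum runs over a subset of the pairs allowed in the unrestricted one, so $\kappa^{\max}_{2,\mathrm{rest}}\le\kappa^{\max}_2$.

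For the final claim, every CPTP map satisfies $\kappa^{\max}\le1$. This is the trace-norm contractivity of positive trace-preserving maps, already quoted in the main text. Hence if either factor is strictly smaller than $1$, the product $\kappa^{\max}_2\kappa^{\max}_1<1$ and the composition is strictly contractive.

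I expect no serious obstacle. The only delicate points are the bookkeeping of the degenerate cases ($\rho'=\sigma'$, or a one-point image) and making explicit that $\kappa\le1$ for CPTP maps. Compactness, which guarantees the suprema are attained, is not even needed for the inequality itself.
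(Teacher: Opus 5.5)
Your proof is correct and is essentially the paper's argument. Both route the pair $(\rho,\sigma)$ through the intermediate pair $(\mathcal{T}_1\rho,\mathcal{T}_1\sigma)$ in the image, treat pairs with $\mathcal{T}_1\rho=\mathcal{T}_1\sigma$ as contributing zero, and obtain the second inequality by enlarging the restricted supremum. Your inequality-chain phrasing (instead of a product of ratios), the one-point-image convention, and the explicit use of $\kappa^{\max}\le 1$ for the final ``in particular'' claim are only slightly more careful bookkeeping of the same proof.
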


\begin{proof}
For any pair $\rho \neq \sigma$ with $\mathcal{T}_1\rho \neq \mathcal{T}_1\sigma$,
\begin{widetext}
    \begin{equation}
    \frac{\|\mathcal{T}_2 \circ \mathcal{T}_1 \rho
          - \mathcal{T}_2 \circ \mathcal{T}_1 \sigma\|_1}
         {\|\rho - \sigma\|_1}
    =\\
    \frac{\|\mathcal{T}_2 \circ \mathcal{T}_1 \rho
          - \mathcal{T}_2 \circ \mathcal{T}_1 \sigma\|_1}
         {\|\mathcal{T}_1\rho - \mathcal{T}_1\sigma\|_1}
    \cdot
    \frac{\|\mathcal{T}_1\rho - \mathcal{T}_1\sigma\|_1}
         {\|\rho - \sigma\|_1},
    \end{equation}
\end{widetext}
while pairs with $\mathcal{T}_1\rho = \mathcal{T}_1\sigma$ contribute zero to the supremum. Taking suprema of the two factors separately yields the first inequality in Eq. \eqref{eq:submult}; the second follows from enlarging the restricted supremum to all of $\mathcal{S}(\mathcal{H})$.
\end{proof}

Proposition \ref{prop:submult} covers the case discussed in Sec. \ref{sub:online-monitored-qrc} [Eq. \eqref{eq:strict-composition}], in which either the unmonitored dynamics $\Lambda$ or the measurement map $\mathcal{M}$ is strictly contractive. On the other hand, the converse is not true: strict contractivity of the composition does not require it of either constituent. To discuss when it emerges, let us introduce the following notion.

\begin{definition}[Saturating set]\label{def:saturating}
The \emph{saturating set} of a CPTP map $\mathcal{T}$ is the set of pairs of states whose distance is preserved by $\mathcal{T}$,
\begin{align}
    \mathcal{F}(\mathcal{T})=\biggl\{(\rho,\sigma)&\in\mathcal{S}(\mathcal{H})\times\mathcal{S}(\mathcal{H}) :  \nonumber \\
    & \rho\neq\sigma, \frac{\norm{\mathcal{T}\rho-\mathcal{T}\sigma}_1}{\norm{\rho-\sigma}_1}=1\biggr\},
\end{align}
and we write $\mathcal{T}(\mathcal{F}(\mathcal{T})) = \{(\mathcal{T}\rho, \mathcal{T}\sigma) : (\rho,\sigma) \in \mathcal{F}(\mathcal{T})\}$ for the set of its image pairs. A map is strictly contractive if and only if $\mathcal{F}(\mathcal{T}) = \emptyset$.
\end{definition}

\begin{theorem}[Disjointness criterion for emergent strict contractivity]
\label{thm:disjoint}
Let $\mathcal{T}_1$, $\mathcal{T}_2$ be CPTP maps on a finite-dimensional Hilbert space. The composition $\mathcal{T}_2 \circ \mathcal{T}_1$ is strictly contractive if and only if
\begin{equation}
    \mathcal{T}_1(\mathcal{F}(\mathcal{T}_1)) \cap \mathcal{F}(\mathcal{T}_2) = \emptyset .
\label{eq:disjointness}
\end{equation}
\end{theorem}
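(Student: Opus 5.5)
The plan is to reduce the statement to a pointwise characterization of the saturating set of the composition. Strict contractivity of $\mathcal{T}_2\circ\mathcal{T}_1$ should then follow from the attainment of the supremum noted before Proposition \ref{prop:submult}. Concretely, I will show
\begin{equation*}
\mathcal{F}(\mathcal{T}_2\circ\mathcal{T}_1)=\bigl\{(\rho,\sigma)\in\mathcal{F}(\mathcal{T}_1):(\mathcal{T}_1\rho,\mathcal{T}_1\sigma)\in\mathcal{F}(\mathcal{T}_2)\bigr\}.
\end{equation*}
This set is empty exactly when condition \eqref{eq:disjointness} holds. By Definition \ref{def:saturating}, emptiness of $\mathcal{F}(\mathcal{T}_2\circ\mathcal{T}_1)$ is equivalent to strict contractivity.

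\emph{Step 1: the saturating set of the composition.} Fix $\rho\neq\sigma$ and let $r_c$ denote the ratio for the composition.
\begin{itemize}
\item If $\mathcal{T}_1\rho=\mathcal{T}_1\sigma$, then $r_c=0$, so the pair is not in $\mathcal{F}(\mathcal{T}_2\circ\mathcal{T}_1)$. It is also not in $\mathcal{F}(\mathcal{T}_1)$.
\item Otherwise, the factorization in the proof of Proposition \ref{prop:submult} writes $r_c=r_2\,r_1$. Here $r_1$ is the $\mathcal{T}_1$-ratio of $(\rho,\sigma)$ and $r_2$ is the $\mathcal{T}_2$-ratio of $(\mathcal{T}_1\rho,\mathcal{T}_1\sigma)$.
\item Both factors lie in $[0,1]$, because CPTP maps are trace-norm contractive. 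Hence $r_c=1$ if and only if $r_1=1$ and $r_2=1$.
\end{itemize}
These are exactly the conditions $(\rho,\sigma)\in\mathcal{F}(\mathcal{T}_1)$ and $(\mathcal{T}_1\rho,\mathcal{T}_1\sigma)\in\mathcal{F}(\mathcal{T}_2)$. For the second condition, the image pair is automatically distinct because $r_1=1$ and $\rho\neq\sigma$. This proves the displayed identity in both directions. In particular, any pair in the intersection $\mathcal{T}_1(\mathcal{F}(\mathcal{T}_1))\cap\mathcal{F}(\mathcal{T}_2)$ has the form $(\mathcal{T}_1\rho,\mathcal{T}_1\sigma)$ and pulls back to a saturating pair $(\rho,\sigma)$ of the composition, and conversely.

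\emph{Step 2: saturation versus strict contractivity.} This is the only step needing care, since strict contractivity asks for $\kappa^{\max}<1$ rather than merely for every ratio being $<1$.
\begin{itemize}
\item Write the ratio as $f(\Delta)=\|\mathcal{T}\Delta\|_1$ on the compact set of traceless Hermitian $\Delta$ with $\|\Delta\|_1=1$. By linearity of $\mathcal{T}$, $f$ is continuous, so its supremum is attained at some $\Delta^\star$.
\item Every such $\Delta$ comes from a pair of distinct states. Decompose $\Delta=P-Q$ with $P,Q\ge0$ of orthogonal support. Tracelessness and unit trace norm give $\mathrm{Tr}P=\mathrm{Tr}Q=1/2$. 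Setting $\rho=2P$ and $\sigma=2Q$ gives $\rho-\sigma=2\Delta$, and the ratio is invariant under this rescaling.
\item Hence $\kappa^{\max}=1$ if and only if some pair of states saturates, that is, iff $\mathcal{F}(\mathcal{T})\neq\emptyset$. This justifies the "if and only if" claimed in Definition \ref{def:saturating}.
\end{itemize}

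Applying Step 2 to $\mathcal{T}=\mathcal{T}_2\circ\mathcal{T}_1$ and combining with Step 1 yields the theorem. I expect the main obstacle to be stating the compactness argument cleanly; Step 1 is essentially bookkeeping on the product of two ratios bounded by one. In particular, one must make sure the maximizer $\Delta^\star$ corresponds to an honest pair of distinct density operators, so that finite dimensionality is genuinely used.
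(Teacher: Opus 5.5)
Your proof is correct and follows essentially the same route as the paper. The composition's contraction ratio factors as a product of two ratios, each bounded by one, so saturation of the composition forces saturation at both steps. Attainment of the supremum (by compactness) then turns $\kappa^{\max}=1$ into the existence of an actual saturating pair. Your Jordan-decomposition argument, showing that the maximizing normalized difference $\Delta^\star$ comes from a genuine pair of distinct density operators, tightens a point the paper handles loosely: it asserts continuity of the ratio on $\mathcal{S}(\mathcal{H})\times\mathcal{S}(\mathcal{H})$, where the ratio is undefined on the diagonal $\rho=\sigma$.
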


\begin{proof}
Suppose first that $\mathcal{T}_2 \circ \mathcal{T}_1$ is not strictly contractive, i.e.\ $\kappa^{\max}(\mathcal{T}_2 \circ \mathcal{T}_1) = 1$. Since $\mathcal{S}(\mathcal{H})$ is compact and the ratio in Eq. \eqref{eq:cont-coef} is continuous in $(\rho,\sigma)$, the supremum is attained; hence there exists a pair $(\rho,\sigma)$, $\rho \neq \sigma$, with $\|\mathcal{T}_2 \circ \mathcal{T}_1 (\rho - \sigma)\|_1  = \|\rho - \sigma\|_1$. Because CPTP maps are contractive in trace norm, the chain
\begin{equation}
\|\mathcal{T}_2 \circ \mathcal{T}_1 (\rho - \sigma)\|_1 \le \|\mathcal{T}_1 (\rho - \sigma)\|_1 \le \|\rho - \sigma\|_1
\end{equation}
must be saturated at both steps. The second equality gives $(\rho,\sigma) \in \mathcal{F}(\mathcal{T}_1)$ (and, in particular, $\mathcal{T}_1\rho \neq \mathcal{T}_1\sigma$), while the first gives $(\mathcal{T}_1\rho, \mathcal{T}_1\sigma) \in \mathcal{F}(\mathcal{T}_2)$. Hence the intersection in Eq. \eqref{eq:disjointness} is nonempty.

Conversely, suppose the intersection is nonempty; then there exists $(\rho,\sigma) \in \mathcal{F}(\mathcal{T}_1)$ with $(\mathcal{T}_1\rho, \mathcal{T}_1\sigma) \in \mathcal{F}(\mathcal{T}_2)$. Composing the two distance-preserving steps gives $\|\mathcal{T}_2 \circ \mathcal{T}_1 \rho - \mathcal{T}_2 \circ \mathcal{T}_1 \sigma\|_1 = \|\rho - \sigma\|_1$, so $\kappa^{\max}(\mathcal{T}_2 \circ \mathcal{T}_1) = 1$ and the composition is not strictly contractive.
\end{proof}

According to Theorem \ref{thm:disjoint}, strict contractivity of the composition fails precisely when some pair of states that preserves distances under $\mathcal{T}_1$ is mapped by $\mathcal{T}_1$ onto a pair that also preserves distances under $\mathcal{T}_2$. We now apply this criterion to the monitored unitary dynamics of the main text.

\medskip
\noindent\textbf{Unitary dynamics under dephasing monitoring.}
Consider a single qubit, a unitary channel $\mathcal{U}\rho = U\rho U^\dagger$, and the completely dephasing channel along the $z$-axis, $\mathcal{D}\rho = \langle 0|\rho|0\rangle\, |0\rangle\!\langle 0| + \langle 1|\rho|1\rangle\, |1\rangle\!\langle 1|$, equivalent to a non-selective projective measurement in the computational basis. Neither map is strictly contractive: $\mathcal{F}(\mathcal{U}) = \mathcal{S}(\mathcal{H})\times\mathcal{S}(\mathcal{H})$, since unitaries preserve all distances, while $\mathcal{F}(\mathcal{D})$ consists of all pairs whose difference $\rho - \sigma$ lies along the $z$-axis of the Bloch sphere --- in particular, all pairs of distinct elements of $\mathcal{Z}$, the set of diagonal states.

\begin{remark}[A single monitored step is not strictly contractive]\label{rem:single-step}
For the single composition $\mathcal{D}\circ\,\mathcal{U}$, one has $\mathcal{U}(\mathcal{F}(\mathcal{U})) = \mathcal{S}(\mathcal{H})\times\mathcal{S}(\mathcal{H})$, since unitary channels map the state space onto itself bijectively. Hence $\mathcal{F}(\mathcal{D}) \subset \mathcal{U}(\mathcal{F}(\mathcal{U}))$, the intersection in Eq. \eqref{eq:disjointness} is nonempty, and by Theorem \ref{thm:disjoint} the composition of a random unitary and a dephasing channel is not strictly contractive.
\end{remark}

\begin{corollary}[Emergent strict contractivity under repeated
monitoring]\label{cor:repeated}
Let $\mathcal{T}_1 = \mathcal{T}_2 = \mathcal{D}\circ\,\mathcal{U}$ describe two consecutive steps of the dephasing-monitored unitary dynamics. Then $\mathcal{T}_2 \circ \mathcal{T}_1$ is strictly contractive if and only if $\mathcal{U}^{-1}(\mathcal{Z}) \neq \mathcal{Z}$, i.e., if and only if $U$ is not a rotation about the $z$-axis.
\end{corollary}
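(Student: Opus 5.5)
We apply Theorem~\ref{thm:disjoint} with $\mathcal{T}_1=\mathcal{T}_2=\mathcal{T}:=\mathcal{D}\circ\mathcal{U}$. Everything then reduces to computing the saturating set $\mathcal{F}(\mathcal{T})$ and its image $\mathcal{T}(\mathcal{F}(\mathcal{T}))$ explicitly in the Bloch representation.

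\emph{Step 1: Bloch description of the maps.} For a single qubit, any difference $\Delta=\rho-\sigma$ is traceless Hermitian, so we can write $\Delta=\tfrac12\,\mathbf{r}\cdot\boldsymbol{\sigma}$, with $\|\Delta\|_1=|\mathbf{r}|$. The unitary channel acts as a rotation, $\mathcal{U}\Delta=\tfrac12 (R\mathbf{r})\cdot\boldsymbol{\sigma}$ with $R\in SO(3)$. The dephasing channel $\mathcal{D}$ projects onto the $z$-component, $\mathbf{r}\mapsto (\hat z\cdot\mathbf{r})\hat z$.

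\emph{Step 2: the saturating set.} From Step 1, $\|\mathcal{T}\Delta\|_1=|\hat z\cdot R\mathbf{r}|$. This equals $|\mathbf{r}|$ if and only if $R\mathbf{r}\parallel\hat z$, that is, $\mathbf{r}\parallel \mathbf{n}:=R^{-1}\hat z$. Hence
\[
\mathcal{F}(\mathcal{T})=\{(\rho,\sigma):\rho\neq\sigma,\ \rho-\sigma\propto \mathbf{n}\cdot\boldsymbol{\sigma}\}.
\]
This set is nonempty; for example, it contains the pair $U^\dagger|0\rangle\!\langle0|U$, $U^\dagger|1\rangle\!\langle1|U$.

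\emph{Step 3: the image of the saturating set.} Every output of $\mathcal{T}$ is diagonal, so $\mathcal{T}(\mathcal{F}(\mathcal{T}))$ consists of pairs of distinct diagonal states. For such pairs the difference is along $\hat z$, and it is nonzero because saturation preserves the norm.

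\emph{Step 4: applying the disjointness criterion.} A pair in $\mathcal{T}(\mathcal{F}(\mathcal{T}))$ has difference along $\hat z$. A pair in $\mathcal{F}(\mathcal{T})$ must have difference along $\mathbf{n}$. Both can hold only if $\mathbf{n}=\pm\hat z$, so the intersection in Eq.~\eqref{eq:disjointness} is empty whenever $\mathbf{n}\neq\pm\hat z$, and the composition is then strictly contractive.

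Conversely, suppose $\mathbf{n}=\pm\hat z$. Take $\rho=U^\dagger|0\rangle\!\langle0|U$ and $\sigma=U^\dagger|1\rangle\!\langle1|U$, which lie in $\mathcal{F}(\mathcal{T})$ by Step 2. Their images are $|0\rangle\!\langle0|$ and $|1\rangle\!\langle1|$, whose difference $\sigma^z$ is along $\mathbf{n}$, so the image pair lies in $\mathcal{F}(\mathcal{T})$. The intersection is therefore nonempty, and by Theorem~\ref{thm:disjoint} the composition is not strictly contractive.

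\emph{Step 5: translating the condition.} It remains to show that $\mathbf{n}=R^{-1}\hat z=\pm\hat z$ is equivalent to $\mathcal{U}^{-1}(\mathcal{Z})=\mathcal{Z}$. Diagonal states are exactly the Bloch vectors on the $z$-axis. The rotation $R^{-1}$ maps this axis onto itself if and only if it sends $\hat z$ to $\pm\hat z$.

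The resulting condition includes the orientation-flipping case $R\hat z=-\hat z$, for example the Pauli-$X$ gate. That gate is not literally a rotation about $z$, yet it also preserves $\mathcal{Z}$. The informal phrase "rotation about the $z$-axis" should therefore be read as "rotation preserving the $z$-axis up to sign." The precise statement is $\mathcal{U}^{-1}(\mathcal{Z})\neq\mathcal{Z}$.

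The step needing most care is Step 2, the exact characterization of $\mathcal{F}(\mathcal{D}\circ\mathcal{U})$. It relies on the qubit identity that the trace norm of a traceless Hermitian operator equals the length of its Bloch vector. The other steps follow directly from it.
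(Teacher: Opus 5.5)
Your proposal is correct and follows the paper's proof: both characterize the saturating set of $\mathcal{D}\circ\mathcal{U}$ as the pairs whose difference lies along $\mathcal{U}^{-1}(\mathcal{Z})$, observe that the image pairs are distinct diagonal states with differences along the $z$-axis, and conclude via Theorem~\ref{thm:disjoint}; your Bloch-vector computation and the explicit witness $U^\dagger|0\rangle\!\langle 0|U,\ U^\dagger|1\rangle\!\langle 1|U$ for the converse simply make explicit what the paper states briefly. Your caveat is also correct: a $\pi$-flip such as the Pauli-$X$ gate preserves $\mathcal{Z}$ without being a rotation about $z$, so the paper's informal gloss should be read as ``preserving the $z$-axis up to sign,'' with $\mathcal{U}^{-1}(\mathcal{Z})\neq\mathcal{Z}$ as the precise condition.
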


\begin{proof}
The saturating sets of both (identical) maps coincide and contain exactly the pairs of states whose difference lies along the rotated axis $\mathcal{U}^{-1}(\mathcal{Z})$, where $\mathcal{U}^{-1}\rho = U^\dagger \rho\, U$: these are precisely the differences that the unitary maps onto the $z$-axis, along which the dephasing channel preserves the trace norm. The image set is easily characterized: $\mathcal{D}$ maps every state to a diagonal state, so $\mathcal{T}_1(\mathcal{F}(\mathcal{T}_1))$ consists of pairs of distinct diagonal states, whose differences all lie along the $z$-axis. Thus $\mathcal{T}_1(\mathcal{F}(\mathcal{T}_1)) \cap \mathcal{F}(\mathcal{T}_2) = \emptyset$ if and only if $\mathcal{U}^{-1}(\mathcal{Z}) \neq \mathcal{Z}$, and the claim follows from Theorem \ref{thm:disjoint}. See Fig. \ref{fig:strict-contractivity} for a geometric illustration.
\end{proof}

\begin{corollary}[ESP without input separability]\label{cor:esp-no-sep}
Under the conditions of Corollary \ref{cor:repeated}, the repeated map converges to a unique, input-independent fixed point,
\begin{equation}
    \lim_{n\to+\infty} (\mathcal{T}_2\circ\mathcal{T}_1)^n \rho_0 = \frac{\mathbb{I}}{d} \qquad \forall \rho_0 \in \mathcal{S}(\mathcal{H}),
\end{equation}
with $d = \dim(\mathcal{H})$.
\end{corollary}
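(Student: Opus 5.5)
Since $\mathcal{T}_2\circ\mathcal{T}_1$ is strictly contractive under the hypothesis of Corollary~\ref{cor:repeated}, the plan is to combine a contraction-mapping argument with unitality. Write $\Phi=\mathcal{T}_2\circ\mathcal{T}_1=(\mathcal{D}\circ\mathcal{U})^2$ and let $\kappa:=\kappa^{\max}(\Phi)<1$. Applying the definition of strict contractivity, Eq.~\eqref{eq:strict-contractivity}, to $\Phi^n$ gives
\begin{equation}
\|\Phi^n\rho-\Phi^n\sigma\|_1\le\kappa^n\|\rho-\sigma\|_1\le 2\kappa^n .
\end{equation}
This is obtained by iterating the bound, which is Proposition~\ref{prop:submult} applied repeatedly. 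Because $\mathcal{S}(\mathcal{H})$ is a closed subset of a finite-dimensional normed space, it is complete. The Banach fixed-point theorem therefore yields a unique fixed point $\rho^\ast$ of $\Phi$, and $\Phi^n\rho_0\to\rho^\ast$ for every initial state $\rho_0$.

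Next, I identify $\rho^\ast$. Both constituents are unital: $\mathcal{U}(\mathbb{I}/d)=U\,\mathbb{I}\,U^\dagger/d=\mathbb{I}/d$, and $\mathcal{D}(\mathbb{I}/d)=\mathbb{I}/d$ because the identity is diagonal with equal populations. Hence $\Phi(\mathbb{I}/d)=\mathbb{I}/d$. Uniqueness of the fixed point then forces $\rho^\ast=\mathbb{I}/d$, and the displayed limit follows for all $\rho_0$.

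The remaining point is to make the ``input-independent'' claim meaningful when the unitary depends on the input. The argument above never used the specific $U$, beyond the condition that $U$ is not a $z$-rotation. So I would note that for every input the map $\mathcal{D}\circ\mathcal{U}(s)$ is unital. The two-step composition for any input pair is strictly contractive by the same reasoning as Corollary~\ref{cor:repeated}, provided the rotated axis differs from $\mathcal{Z}$; this can be checked via Theorem~\ref{thm:disjoint} with $\mathcal{T}_1$ and $\mathcal{T}_2$ carrying different unitaries.

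For a time-varying sequence, $\mathbb{I}/d$ is a common fixed point of all the maps. Since $\mathbb{I}/d$ is left invariant by every step, I can take $\sigma=\mathbb{I}/d$ directly in the contraction estimate:
\begin{equation}
\|\Phi_n\cdots\Phi_1\rho_0-\mathbb{I}/d\|_1\le\prod_j\kappa_j\,\|\rho_0-\mathbb{I}/d\|_1 .
\end{equation}
This gives convergence to $\mathbb{I}/d$ whenever $\prod_j\kappa_j\to0$, with no need for a separate fixed-point theorem.

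The main obstacle is this uniformity. Strict contractivity must hold with $\sup_s\kappa(s)<1$, which requires a compact input range and continuity of $U(s)$, with every $U(s)$ avoiding $z$-rotations. Beyond that step, the argument is routine.
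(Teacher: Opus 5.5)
Your proof is correct and follows the paper's argument: strict contractivity of $(\mathcal{D}\circ\mathcal{U})^2$ from Corollary~\ref{cor:repeated} gives a unique attracting fixed point (you obtain this from Banach's theorem where the paper cites Raginsky), and unitality of both factors forces that fixed point to be $\mathbb{I}/d$. Your extra argument for time-varying inputs, which uses $\mathbb{I}/d$ as a common fixed point and $\prod_j\kappa_j\to 0$, goes beyond the paper's one-line proof and makes rigorous its later informal claim about input-dependent trajectories; note, however, that the per-step condition should be $U(s)^\dagger\sigma^z U(s)\neq\pm\sigma^z$ rather than ``$U(s)$ is not a $z$-rotation'', since e.g.\ $U=\sigma^x$ is not a $z$-rotation yet preserves the $z$-axis and gives $\kappa=1$.
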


\begin{proof}
Strict contractivity guarantees a unique fixed point \cite{raginsky}; since both sub-maps are unital, their only common fixed point is the maximally mixed state.
\end{proof}

\begin{remark}[Finite measurement strength]\label{rem:finite-strength}
These results are not restricted to completely dephasing maps but hold for arbitrary dephasing channels. A dephasing measurement of finite strength [Eq. \eqref{eq:deph-meas-ops}] multiplies the transverse components of the Bloch vector by $\gamma = \sin\theta < 1$ for $\theta \in [0,\pi/2)$, so that only differences along the $z$-axis preserve their trace norm: the saturating sets are unchanged and Corollaries \ref{cor:repeated} and \ref{cor:esp-no-sep} apply verbatim.
\end{remark}

Corollary \ref{cor:esp-no-sep} makes precise the statement of Sec. \ref{subsec:monitoring-resource}: unitary dynamics under non-selective dephasing monitoring satisfies the echo-state property (strict contractivity emerges from the repeated composition even though it is absent in each individual step) but fails input separability, since all input-dependent trajectories converge to the same maximally mixed state, making different inputs indistinguishable. This is why dephasing (and, by the same argument, partial projective) monitoring cannot render unitary reservoirs viable, whereas non-unital schemes such as amplitude-damping monitoring can (Table \ref{table:monitored-qrc}).

\bibliography{paper1_bib.bib}

\end{document}